\pdfoutput=1


\def\final{0}

\def\mnotes{0}
\def\mydebug{0}

\ifnum\final=1
  \documentclass[10pt,conference,letterpaper]{IEEEtran}
\else
   \documentclass[11pt]{article}
	\usepackage{fullpage}
   \usepackage{amssymb, amsfonts,amsmath, amsthm}
\fi
\def\colorson{0}

\usepackage{graphicx}
\usepackage{color, url}

\usepackage{wrapfig}





\topmargin 0pt

\advance \topmargin by -\headheight

\advance \topmargin by -\headsep


 \textheight 9in


\ifnum\colorson=1
\newenvironment{todo}{\noindent
\sf \footnotesize \textcolor{blue}{To go here}:
\begin{CompactItemize}\color{blue}}
{\color{black}\end{CompactItemize}\rm \normalsize}
\else

\fi

\ifnum\final=0

\theoremstyle{definition}


\theoremstyle{remark}

\newtheorem{myremark}{Remark} [section]
\newenvironment{remark}{\begin{myremark}}{$\diamondsuit$\end{myremark}}

\newtheorem{myexample}{Example}

\fi

\newtheorem{theorem}{Theorem}[section]
\newtheorem{corollary}[theorem]{Corollary}
\newtheorem{lemma}[theorem]{Lemma}
\newtheorem{definition}{Definition}[section]

\newtheorem{proposition}[theorem]{Proposition}

\newtheorem{fact}[theorem]{Fact}

\newtheorem{observation}{Observation}

\usepackage{amsfonts,amsmath, amsopn, amssymb, epsfig, graphicx, multirow, url, verbatim}
\usepackage{color,bm}

\newcommand{\D}{{\cal D}}

\ifnum\mnotes=0
\newcommand{\mnote}[1]{}
\else
\newcounter{mynotes}
\setcounter{mynotes}{0}

\newcommand{\mnote}[1]{\addtocounter{mynotes}{1}{{\bf !}}
{\scriptsize  {\arabic{mynotes}.\ {\sf \textcolor{red}{#1}}}}}
\fi

\ifnum\final=0

\newcommand{\divnote}[1]{}
\newcommand{\grinote}[1]{}
\newcommand{\magnote}[1]{}
\newcommand{\granote}[1]{}

\else

\newcommand{\divnote}[1]{\textbf{Div: #1}}
\newcommand{\grinote}[1]{\textbf{Gri: #1}}
\newcommand{\magnote}[1]{\textbf{Mag: #1}}
\newcommand{\granote}[1]{\textbf{Gra: #1}}

\fi



\newcommand{\tabref}[1]{Table~\ref{tab:#1}}



\newcommand{\ignore}[1]{}

\newcommand{\eps}{\varepsilon}

\DeclareMathOperator{\Var}{Var}
\DeclareMathOperator{\Lap}{Lap}

\usepackage{algorithm, algorithmic}
\usepackage[algo2e,linesnumbered,ruled,vlined]{algorithm2e}

\newenvironment{myproof}{\begin{proof}
}{\ifnum\final=1
{\hfill\qed}
\fi
\end{proof}}
\newenvironment{proofof}[1]{\begin{myproof}[\ifnum\final=0 Proof \fi of {#1}]
}{\end{myproof}}

\newcommand{\plots}{.}

\usepackage{times}
\usepackage{subfigure}
\newcommand{\eat}[1]{}
\newcommand{\transpose}{\ensuremath{^{T}}}
\DeclareMathOperator{\Cov}{Cov}
\DeclareMathOperator{\diag}{diag}
\newcommand{\E}{\mathsf{E}}
\newcommand{\F}{\mathcal{F}}

\begin{document}

\ifnum\final=1
\newcommand{\qed}{\hfill \rule{1ex}{1ex}}
\title{Accurate and Efficient Private Release of Datacubes and
Contingency Tables}
\author{
\authorblockN{Grigory Yaroslavtsev}
\authorblockA{Pennsylvania State University\\
grigory@cse.psu.edu}
\and
\authorblockN{Graham Cormode ~~ Cecilia M. Procopiuc ~~ Divesh Srivastava}
\authorblockA{AT\&T Labs -- Research
\\\{graham,magda,divesh\}@research.att.com}
}
\maketitle
\else
\title{Accurate and Efficient Private Release of Datacubes}
\author{Graham Cormode\thanks{AT\&T Labs -- Research, Florham Park, NJ 07932., {\tt graham@research.att.com}} \and
Cecilia M. Procopiuc\thanks{AT\&T Labs -- Research, Florham Park, NJ 07932., {\tt magda@research.att.com}} \and Divesh
Srivastava\thanks{AT\&T Labs -- Research, Florham Park, NJ 07932., \tt divesh@research.att.com} \and Grigory
Yaroslavtsev\thanks{Pennsylvania State University, University Park, PA 16802., {\tt grigory@grigory.us}}}
\maketitle
\fi

\begin{abstract}
A central problem in releasing aggregate information about
sensitive data is to do so accurately
while providing a privacy guarantee on the output.
Recent work focuses on the class of {\em linear queries}, which include basic
counting queries, data cubes, and contingency tables. The goal is to
maximize the utility of their output, while giving a rigorous privacy
guarantee.
Most results follow a common template: pick a ``strategy'' set of linear
queries to apply to the data, then use the noisy answers to these
queries to reconstruct the queries of interest.
This entails either picking a strategy set that is hoped to be good
for the queries, or performing a costly search over the space
of all possible strategies.

In this paper, we propose a new approach that balances accuracy and
efficiency:
  we show how to improve the accuracy of a given query set by
  answering some strategy queries more accurately than others.
This leads to an efficient optimal noise allocation for many popular
strategies,
including wavelets, hierarchies, Fourier coefficients and more.
For the important case of marginal queries we show that this strictly improves on previous
methods, both analytically and empirically.
Our results also extend to ensuring that the returned query answers
are consistent with an (unknown) data set at minimal extra cost in
terms of time and noise.
\end{abstract}

\section{Introduction}\label{intro}

The long-term goal of much work in data privacy is to enable the
release of information that accurately captures the behavior of an
input data set, while preserving the privacy of individuals described
therein. 
There are two central, interlinked questions to address around this
goal: what privacy properties should the transformation process
possess, and how can we ensure that the output is useful for subsequent
analysis and processing? 
The model of Differential Privacy has lately gained broad acceptance as a
criterion for private data release~\cite{Dwork:06,DMNS06}. 
There are now multiple different methods which achieve
Differential Privacy over different data types~\cite{BCDKST07,BLR08,CPSSY12,DWHL11,HLS10,HR10,HRMS10,LHRMG10,LM11,XWG10}. 
Some provide a strong utility guarantee, while others demonstrate
their utility via empirical studies. 
These algorithms also vary from the highly practical, to taking time 
exponential in the data size.

The output of the data release should be compatible with existing
tools and processes in order to provide usable results. 
The model of contingency tables is universal, in that any relation can
be represented exactly in this form.
That is, the contingency table of a dataset over a
subset of attributes contains, for each possible attribute combination,
the number of tuples that occur in the data with that set of attribute
values. 
In this paper, we call such a contingency table the {\em marginal} of
the database over the respective subset of attributes. 
The set of all possible marginals for a relation is captured
by the data cube. 
Contingency tables and the data cube in turn are examples of a more
general class of linear queries, 
i.e., each query is a linear combination of the entries of the contingency
table over all attributes in the input relation. 

There has been much interest in providing methods to answer such
linear queries with privacy guarantees. 
In this paper, we argue that these all fit within a general framework: 
answer some set of queries $S$ over the data (not necessarily the set that
was requested), with appropriate noise added to provide the privacy, 
then use the answers to answer the given queries 
A limitation of prior work is that it applies {\em uniform noise}
to the answers to $S$: the same magnitude of noise is added to each query. 
However, it turns out that the accuracy can be much improved by
using {\em non-uniform noise}: using different noise for
each answer, while providing the same overall guarantee. 
The main contribution of this paper is to provide a full formal
understanding of this problem and the role that non-uniform noise can
play. 

\noindent
{\bf Example.} 
Figure~\ref{toyextable} 
shows a table with 3 binary attributes $A$, $B$ and $C.$
As in prior work~\cite{LHRMG10}, we think of a database $\D$ as an
$N$-dimensional vector $x\in\mathbb R^N$, where $N$ is the domain size
of $\D;$ i.e., if $\D$ has attributes  
$A_1,\ldots,A_d,$ then $N=\Pi_{i=1}^d |A_i|.$ 
We linearize the domain of $\D$, so that each index position $i$, 
$1\leq i\leq N$,  corresponds to a unique combination $\alpha$ 
of attribute values, and $x_i$ is the number of tuples in 
$\D$ that have values $\alpha$. 
In
Figure~\ref{toyextable},
we linearized the domain in the order 
$000, 001, \ldots, 111.$ 
Here, position $i=2$ corresponds to the combination of values 
$\alpha = 001$. 
Thus, $x_2=2$ since $\D$ contains two tuples (1 and 4) with these values. 

\ifnum\final=1
\begin{figure*}
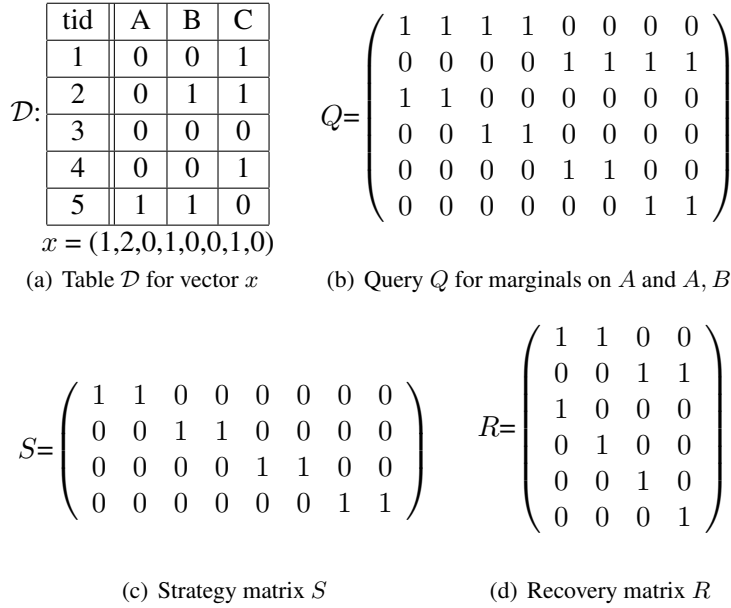

\small
\subfigure[Table $\D$ for vector $x$]{
\label{toyextable}
\begin{tabular}[b]{c@{\hspace*{-0.02mm}}c@{\hspace*{-0.02mm}}c}
$\D$ & : & \begin{tabular}{|c||c|c|c|}
\hline
tid & A & B & C\\
\hline
1 & 0 & 0 & 1\\
\hline
2 & 0 & 1 & 1\\
\hline
3 & 0 & 0 & 0\\
\hline
4 & 0 & 0 & 1\\
\hline
5 & 1 & 1 & 0\\
\hline
\end{tabular}\\[9mm]
& & $x$ = (1,2,0,1,0,0,1,0)
\end{tabular} 
}%
\subfigure[Query $Q$ for marginals on $A$ and $A,B$]{
\label{toyexquery}
\begin{tabular}[b]{c@{\hspace*{-0.02mm}}c@{\hspace*{-0.02mm}}c}
$Q$ & = & $\left( \begin{array}{cccccccc}
1 & 1 & 1 & 1 & 0 & 0 & 0 & 0\\
0 & 0 & 0 & 0 & 1 & 1 & 1 & 1\\
1 & 1 & 0 & 0 & 0 & 0 & 0 & 0\\
0 & 0 & 1 & 1 & 0 & 0 & 0 & 0\\
0 & 0 & 0 & 0 & 1 & 1 & 0 & 0\\
0 & 0 & 0 & 0 & 0 & 0 & 1 & 1\\
\end{array} \right)$\\[9mm]
& &\\ 
\end{tabular} 
}%
\subfigure[Strategy matrix $S$]{
\label{toyexstrategy}
\begin{tabular}[b]{c@{\hspace*{-0.02mm}}c@{\hspace*{-0.02mm}}c}
$S$ & = & $\left( \begin{array}{cccccccc}
1 & 1 & 0 & 0 & 0 & 0 & 0 & 0\\
0 & 0 & 1 & 1 & 0 & 0 & 0 & 0\\
0 & 0 & 0 & 0 & 1 & 1 & 0 & 0\\
0 & 0 & 0 & 0 & 0 & 0 & 1 & 1\\
\end{array} \right)$\\[9mm]
& &\\
\end{tabular} 
}%
\subfigure[Recovery matrix $R$]{
\label{toyexrecovery}
\begin{tabular}[b]{c@{\hspace*{-0.02mm}}c@{\hspace*{-0.02mm}}c}
$R$ & = & $\left( \begin{array}{cccc}
1 & 1 & 0 & 0\\
0 & 0 & 1 & 1\\
1 & 0 & 0 & 0\\
0 & 1 & 0 & 0\\
0 & 0 & 1 & 0\\
0 & 0 & 0 & 1\\
\end{array} \right)$\\[9mm]
& &\\
\end{tabular}}
\caption{Example contingency table, query matrix, with strategy and recovery matrices.}
\label{toyex}
\end{figure*}

\else

\begin{figure*}
 \centering
\subfigure[Table $\D$ for vector $x$]{
\label{toyextable}
\begin{tabular}[b]{c@{\hspace*{-0.02mm}}c@{\hspace*{-0.02mm}}c}
$\D$ & : & \begin{tabular}{|c||c|c|c|}
\hline
tid & A & B & C\\
\hline
1 & 0 & 0 & 1\\
\hline
2 & 0 & 1 & 1\\
\hline
3 & 0 & 0 & 0\\
\hline
4 & 0 & 0 & 1\\
\hline
5 & 1 & 1 & 0\\
\hline
\end{tabular}\\[9mm]
& & $x$ = (1,2,0,1,0,0,1,0)
\end{tabular} 
}%
\subfigure[Query $Q$ for marginals on $A$ and $A,B$]{
\label{toyexquery}
\begin{tabular}[b]{c@{\hspace*{-0.02mm}}c@{\hspace*{-0.02mm}}c}
$Q$ & = & $\left( \begin{array}{cccccccc}
1 & 1 & 1 & 1 & 0 & 0 & 0 & 0\\
0 & 0 & 0 & 0 & 1 & 1 & 1 & 1\\
1 & 1 & 0 & 0 & 0 & 0 & 0 & 0\\
0 & 0 & 1 & 1 & 0 & 0 & 0 & 0\\
0 & 0 & 0 & 0 & 1 & 1 & 0 & 0\\
0 & 0 & 0 & 0 & 0 & 0 & 1 & 1\\
\end{array} \right)$\\[9mm]
& &\\ 
\end{tabular} 
}%

\subfigure[Strategy matrix $S$]{
\label{toyexstrategy}
\begin{tabular}[b]{c@{\hspace*{-0.02mm}}c@{\hspace*{-0.02mm}}c}
$S$ & = & $\left( \begin{array}{cccccccc}
1 & 1 & 0 & 0 & 0 & 0 & 0 & 0\\
0 & 0 & 1 & 1 & 0 & 0 & 0 & 0\\
0 & 0 & 0 & 0 & 1 & 1 & 0 & 0\\
0 & 0 & 0 & 0 & 0 & 0 & 1 & 1\\
\end{array} \right)$\\[9mm]
& &\\
\end{tabular} 
}%
\subfigure[Recovery matrix $R$]{
\label{toyexrecovery}
\begin{tabular}[b]{c@{\hspace*{-0.02mm}}c@{\hspace*{-0.02mm}}c}
$R$ & = & $\left( \begin{array}{cccc}
1 & 1 & 0 & 0\\
0 & 0 & 1 & 1\\
1 & 0 & 0 & 0\\
0 & 1 & 0 & 0\\
0 & 0 & 1 & 0\\
0 & 0 & 0 & 1\\
\end{array} \right)$\\[9mm]
& &\\
\end{tabular}}
\caption{Example contingency table, query matrix, with strategy and recovery matrices.}
\label{toyex}
\end{figure*}
\fi

\eat{
\begin{figure*}
{\scriptsize
\begin{tabular}{cccc}
\begin{tabular}{c@{\hspace*{-0.02mm}}c@{\hspace*{-0.02mm}}c}
$\D$ & : & \begin{tabular}{|c||c|c|c|}
\hline
tid & A & B & C\\
\hline
1 & 0 & 0 & 1\\
\hline
2 & 0 & 1 & 1\\
\hline
3 & 0 & 0 & 0\\
\hline
4 & 0 & 0 & 1\\
\hline
5 & 1 & 1 & 0\\
\hline
\end{tabular}\\[9mm]
& & $x$ = (1,2,0,1,0,1,0)\\
\end{tabular} 
&
\begin{tabular}{c@{\hspace*{-0.02mm}}c@{\hspace*{-0.02mm}}c}
$Q$ & = & $\left( \begin{array}{cccccccc}
1 & 1 & 1 & 1 & 0 & 0 & 0 & 0\\
0 & 0 & 0 & 0 & 1 & 1 & 1 & 1\\
1 & 1 & 0 & 0 & 0 & 0 & 0 & 0\\
0 & 0 & 1 & 1 & 0 & 0 & 0 & 0\\
0 & 0 & 0 & 0 & 1 & 1 & 0 & 0\\
0 & 0 & 0 & 0 & 0 & 0 & 1 & 1\\
\end{array} \right)$\\[9mm]
& &\\ 
\end{tabular} 
&
\begin{tabular}{c@{\hspace*{-0.02mm}}c@{\hspace*{-0.02mm}}c}
$S$ & = & $\left( \begin{array}{cccccccc}
1 & 1 & 0 & 0 & 0 & 0 & 0 & 0\\
0 & 0 & 1 & 1 & 0 & 0 & 0 & 0\\
0 & 0 & 0 & 0 & 1 & 1 & 0 & 0\\
0 & 0 & 0 & 0 & 0 & 0 & 1 & 1\\
\end{array} \right)$\\[9mm]
& &\\
\end{tabular} 
&
\begin{tabular}{c@{\hspace*{-0.02mm}}c@{\hspace*{-0.02mm}}c}
$R$ & = & $\left( \begin{array}{cccc}
1 & 1 & 0 & 0\\
0 & 0 & 1 & 1\\
1 & 0 & 0 & 0\\
0 & 1 & 0 & 0\\
0 & 0 & 1 & 0\\
0 & 0 & 0 & 1\\
\end{array} \right)$\\[9mm]
& &\\
\end{tabular}\\[1.7cm]
(a) & (b) & (c) & (d)\\
\end{tabular}
}
\caption{(a) Table $\D$ and corresponding vector $x$; (b) query $Q$ computes the marginals on $A$ and on $A,B$;
(c) strategy matrix $S$; (d) recovery matrix $R.$\label{toyex}}
\end{figure*}
}

Suppose that we want to compute two marginals over $\D$: the marginal
over $A$, and the marginal over $A,B$. 
The query marginals can be represented as a matrix $Q$,
as depicted in 
Figure~\ref{toyexquery}, so that the answer is $Qx$:
The first two rows compute the marginal over $A$; i.e., the first row
is the linear query that counts all tuples  $t$ with $t.A=0$; while
the second row counts all tuples with $t.A=1.$ 
Similarly, the third row counts all tuples with $t.A=0$ and $t.B=0$,
and so on. 
Differentially private mechanisms answer $Q$ in the
form of $y=Qx+\tau$,
where $\tau$ is a random vector whose distribution provides a certain
level of privacy.
The error of the answer is generally defined as the variance $\Var(y)$
\cite{LHRMG10,XWG10}. 

For example, one way to provide $\eps$-differential privacy 
adds uniform noise to each answer.  
Based on the structure of $Q$ in Figure~\ref{toyexquery}, we can add noise with 
variance $\frac{8}{\eps^2}$ to each answer; see details in Section~\ref{defs}).
Over the six queries, the sum of variances is  
$\frac{48}{\eps^2}$. 
However, we can do better with a non-uniform approach. 
For example, we can add noise with variance
$2(\frac{9}{4\eps})^2$ to the answers for the first two rows of $Q$, 
and noise with variance $2(\frac{9}{5\eps})^2$ to the remaining four
answers, and still provide $\eps$-differential privacy. 
The sum of the six variances is then 
$2\cdot2(\frac{9}{4\eps})^2 + 4 \cdot 2(\frac{9}{5\eps})^2 = 46.17/\eps^2$.
We can improve this even further by changing how we answer the
queries:  we can answer the first query $Q_1$ by taking 
half of the first answer, and adding half of the third and fourth
answers. 
The resulting variance of $Q_1$ is
\[\textstyle \frac{1}{4} \cdot 2 \big(\frac{9}{4\eps}\big)^2 + 
\frac{1}{4} \cdot 2\big(\frac{9}{5\eps}\big)^2 + 
\frac{1}{4} \cdot 2\big(\frac{9}{5\eps}\big)^2 = 5.77/\eps^2.\] 
Similar tricks yield the same variance for all other answers, so the sum
of all six variances is now $34.6/\eps^2$, a 28\% reduction over the uniform approach.
\qed




\eat{
For downstream compatibility, it is sometimes necessary to enforce
additional {\em consistency} constraints on the released data.
For example, in Figure~\ref{toyex}, one may wish to enforce the
condition that the noisy count for $t.A=0$ is  
equal to the noisy count for $(t.A=0\mbox{ and }t.B=0)$ plus the noisy
count for $(t.A=0\mbox{ and }t.B=1)$.
Additional consistency conditions can be added, such as enforcing
non-negative counts. 
For brevity, we omit a detailed discussion of such extra conditions. 
As we discuss later, consistent counts can be achieved
while minimizing the error $\Var(y)$. 
}

This example shows that we can significantly improve the accuracy of
our answers while preserving the same level of privacy by adopting
non-uniform noise and careful combination of intermediate answers to
give the final answer. 
Yet further improvement can result by choosing a different set of
queries to obtain noisy answers to. 
The problem we address in this paper is how to use these techniques to
efficiently and accurately provide answers to such queries $Q$ that meet the
differential privacy guarantee.  
This captures the core problems of releasing data cubes, contingency
tables and marginals. 
Our results are more general, as they apply to arbitrary
sets of linear queries $Q$, but our focus is on these important
special cases.  
We also discuss how to additionally ensure that the answers meet
certain consistency criteria. 
Next, we study how existing techniques can be applied to this problem,
and discuss their limitations. 

\smallskip
\noindent
{\bf The Strategy/Recovery approach.}
Mechanisms for minimizing the error of linear counting queries under
differential privacy have attracted a lot of attention. 
Work in the theory
community~\cite{BLR08, GRU12, HLS10, HR10, HT10, RR10} has focused on 
providing the best bounds
on noise for an arbitrary set of such queries, in both the online and
offline setting. 
However, these mechanisms are rarely practical for large databases with
moderately high dimensionality: they can scale exponentially with the size
of their input. 

Work in database research has aimed to deliver methods that scale to realistic data sizes. 
Much of this work builds on basic primitives in differential privacy
such as adding appropriately scaled noise to a numeric quantity from a
specific random distribution (see Section~\ref{defs}).
Repeating this process for multiple different quantities, and
reasoning about how the privacy guarantees compose, it is possible to
ensure that the full output meets the privacy definition.
The goal is then to minimize the error
introduced into the query answers (as measured by their variance)
while satisfying the privacy conditions. 

Given this outline, we observe that the bulk of methods using noise
addition fit into a two-step framework that we dub the 
`strategy/recovery' approach:

\begin{itemize}
\item {\bf Step 1.} Find a {\em strategy matrix} $S$ and compute the
  vector $z = Sx + \nu$, where $\nu$ is a random {\em noise vector}
  drawn from an appropriate distribution. 
Then $z$ is the differentially private answer to the queries represented by $S$.
\item {\bf Step 2.} Compute a {\em recovery matrix} $R$, such that $Q=RS$. 
Return $y=Rz$ as the differentially private answer to the queries $Q.$ 
The variance $\Var(y)$ is often used as an error measure for the approach. 
\end{itemize} 

We show this method schematically in Figure~\ref{overview1}.
For example, Figures~\ref{toyexstrategy} and \ref{toyexrecovery} show
a possible choice of matrices $S$ and $R$ for the query matrix $Q$ in
Figure~\ref{toyexquery}.  
In this case, the strategy $S$ computes the marginal on $A,B$; Step 1
above adds random noise independently to all cells in this marginal. 
The recovery $R$ computes the marginal on $A$ by aggregating the
corresponding noisy cells from the marginal on $A,B$ 
(the first two rows of $R$), 
and also outputs the marginal on $A,B$ (the last four rows of $R$). 

We now show how prior work fits into this approach. 
In many cases, the first step directly picks a fixed matrix for $S$, 
by arguing that this is suitable for a particular class of queries
$Q$. 
For example, when setting $S=I$ (hence $R=Q$), the approach 
computes a set of noisy counts $\tilde{x}_i$ by
adding Laplace noise independently to each $x_i$. 
The answer to any query matrix $Q$ is computed over these noisy
counts, i.e., $y=Q\tilde{x}$; this model was analyzed 
in~\cite{BCDKST07}. 
By contrast, when $S=Q$ (and $R=I$), 
as discussed in~\cite{Dwork:06}, 
the approach adds noise to the result
of each query in $Q$, i.e., $y=Qx + \nu$. 

\begin{figure}
\centering
\includegraphics[width=0.4\textwidth]{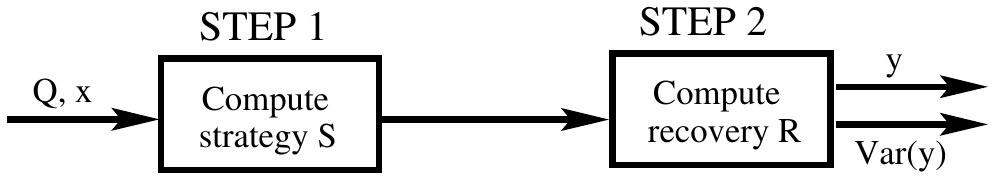}
\caption{Framework of prior work.}
\label{overview1}
\end{figure}

Several more sophisticated strategies have been designed, with the
goal of minimizing the error $\Var(y)$ for various query workloads. 
When $Q$ consists of low-dimensional range queries, \cite{XWG10}
proposes $S$ to be the wavelet transform, 
while~\cite{HRMS10} studies the strategy $S$ corresponding to a
hierarchical structure over $x$. 
However, as shown in~\cite{LHRMG10}, neither of these strategies is
particularly accurate for other types of queries. 
For marginals, \cite{BCDKST07} chooses $S$ to be the Fourier transform
matrix, and~\cite{DWHL11} employs a clustering algorithm over the
queries to compute $S$. 
Figures~\ref{toyexstrategy} and \ref{toyexrecovery} depict the output computed
via~\cite{DWHL11} on query matrix $Q$ (Figure~\ref{toyexquery}).
Other work has suggested the use of random projections as the strategy
matrix, connecting to the area of sparse recovery~\cite{CPST12,LZWY11}.
Many of these choices are relatively fast: 
that is, $S$ and $S^{-1}$ can be applied to a vector of length $N$ in time
$O(N)$ or $O(N\log N)$ in the case of wavelet and Fourier transforms,
respectively. 
This is important, since real data can have large values of $N$, and
so asymptotically higher running time may not be practical. 
A limitation of~\cite{DWHL11} is that the clustering step is very
expensive, limiting the scalability of the approach.

An important technical distinction for the strategy/recovery approach is whether
or not the strategy $S$ is invertible. 
If it is (e.g., when $S$ is the Fourier or wavelet transform), then
the recovery matrix $R=QS^{-1}$ is unique,
and the query answer $y$ is
guaranteed to be consistent (see Definition~\ref{def-consist}). 
Then the error measure $\Var(y)$ depends only on $S$ (and $Q$). 
However, if $S$ is not invertible, then there can be many choices for
$R$, and $\Var(y)$ depends on both $S$ and $R$. 
The optimal recovery $R$ that minimizes $\Var(y)$ (for
a fixed $S$)  can be computed via the least squares
method~\cite{HRMS10,LHRMG10}
and $\Var(y)$ has a closed-form expression as a function of $S$. 
Using this fact, Li et al.~\cite{LHRMG10} study the following
optimization problem: {\em Given queries $Q$ and a formula for
  $\Var(y)$ as a function of $S$, compute the strategy $S$ that
  minimizes $\Var(y)$}. 
This is a tough optimization, since the search is over
all possible strategy matrices $S$. 
Their {\em matrix mechanism} uses a rank-constrained 
semidefinite program (SDP) to compute the optimal
$S$.
Solving this SDP is very costly as a function of $N$, making it
impractical for data with more than a few tens of entries. 

In summary, the search for a strategy matrix $S$ is currently done
either by picking one that we think is likely to be ``good'' for queries $Q$,
or by solving an SDP, which is impractical even for moderate size
problems. 

%

\smallskip
\noindent
{\bf Our Contributions.}
Most of the prior approaches discussed above use the
uniform ``noise budgeting'' strategy, i.e., each 
value $\nu_i$ of the noise vector is (independently) drawn from the
same random distribution. 
The scaling parameter of this distribution depends on the desired
privacy guarantee $\eps$, as well as the ``sensitivity'' of the
strategy matrix $S$ (see Section~\ref{defs}). 

In the extended version of~\cite{LHRMG10}, the authors prove that any
non-uniform noise budgeting strategy can be reduced to a uniform
budgeting strategy by scaling the rows of $S$ with different factors. 
However, computing the optimal scaling factors this way is
impractical, as it requires solving an SDP.  
The only efficient method for computing non-uniform noise budgets
we are aware of applies to the
special case when $Q$ is a range query workload~\cite{CPSSY12}. 
There, $S$
corresponds to a multi-dimensional hierarchical decomposition, 
and recovery $R$ corresponds to the greedy range decomposition. The resulting budgeting is not
always optimal.

In this paper we show how to compute the {\em optimal} noise budgets in
time at most linear in the sizes of $R$ and $S$, for a large class of queries
$Q$ (including marginal queries), and for most of the matrices $S$
considered in prior work. 
This includes the Fourier transform, the wavelet transform, the
hierarchical structure over $x$, and any strategy consisting of a set
of marginals (in particular, the clustering strategy of~\cite{DWHL11}). 

The overall framework introduced is depicted in
Figure~\ref{overview2}: Given strategy matrix $S$ and recovery matrix 
$R$, we compute optimal noise budgets $\eps_i$ for each query, and draw each 
noise value $\nu_i$ from a random distribution that depends on $\eps_i$ (Step 
2). We then derive a new recovery matrix $R$ that minimizes
$\Var(y)$ (Step 3), for the noise budgets computed in Step 2.

\begin{figure}[t]
\centering
\includegraphics[width=0.49\textwidth]{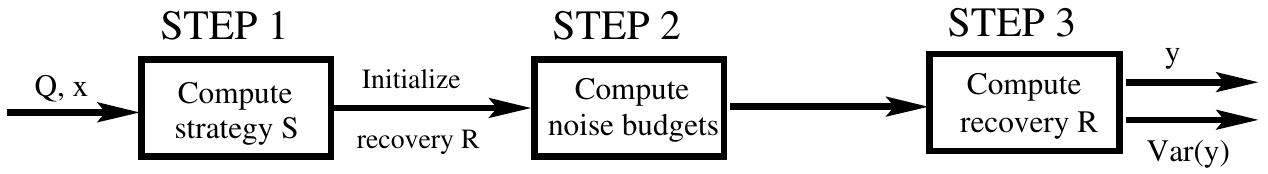}
\caption{Our proposed framework.}
\label{overview2}
\end{figure}

The most general approach would be 
to provide a mathematical formulation for the following 
global optimization problem: 
{\em Given the query matrix $Q$, compute the strategy $S$, the recovery $R$, 
and the noise budgets $\eps_i$ that minimize $\Var(y)$.} 
However, this problem essentially reduces to that addressed by the
matrix mechanism~\cite{LHRMG10}, and requires solving an SDP. 

Instead, we study how to efficiently solve optimization problems where two
out of the three parameters $S$, $R$ and $\{\eps_i\}_i$  are fixed. 
In Section~\ref{noisebudget}, we solve the optimization problem: 
{\em Given a decomposition of query matrix $Q$ into strategy $S$ and
  recovery $R$, compute the optimal noise budgets $\eps_i$ that
  minimize $\Var(y)$}.  
We provide a formula for $\Var(y)$, as a function of $S$ and $R$. 
In  Section~\ref{ols}, we apply the generalized least squares
method to solve the following problem: {\em Given the query matrix
  $Q$, the strategy $S$, and the noise budgets $\eps_i$, compute the
  recovery $R$ that minimizes $\Var(y)$.}  
Following the steps in this framework provides efficient algorithms with
low error. A faster alternative computes a consistent output $y$ of Step 3
with small (but non-optimal) error; see Sections~\ref{subsec:fastconsist} 
and \ref{sec:consistency}. Our approach strictly 
improves over the previous result from~\cite{DWHL11}.


\eat{
In principle, we could then compute a new $S$, given $R$ and $\eps_i$'s,  
and iterate the entire process, reducing the error in
each step until we reach a local minimum.
 However, computing such an $S$ is more
 costly, and the resulting $S$ may no longer have the structured
 properties that we rely on to execute Step 2 efficiently. We could also
iterate only through Steps 2 and 3. But this requires computing
$R$ explicitly, which can be relatively inefficient. 
Taking a single pass through  the framework in
Figure~\ref{overview2} yields fast algorithms. 
As we show in Section~\ref{consistency},    
we can efficiently compute the (consistent) output $y$ of Step 3,
without materializing $R$.  
Our algorithm applies to cases where $Q$ consists of marginal queries.
This improves over the previous result
from~\cite{DWHL11}, which is described as only being applicable up to
mid-size data cubes before the time cost becomes too high. 
}

In the common case that $S$ is invertible,
our framework decreases the error for
the Fourier and wavelet approaches from prior work. 
Computing the optimal noise budgets here is very fast, so this
improvement comes with only a small time overhead: less than 1 second
in our experiments. 

\eat{
\magnote{Did not change anything from here to end of section.}
This leads us to propose the iterative framework depicted in
Figure~\ref{overview2}:  Suppose that we start with 
some strategy $S$ in Step 1. 
We initialize a recovery matrix $R$ such that $Q=RS$, then apply the
result in Section~\ref{noisebudget} to compute the optimal noise
budgets $\eps_i$ and the corresponding $\Var(y)$ (Step 2). 
Keeping $S$ and $\eps_i$ fixed, we compute a new recovery $R$ as in
Section~\ref{ols} to further decrease $\Var(y)$ (Step 3). 
For this new $R$ and the existing $\eps_i$, we then compute the
optimal strategy $S$ that minimizes $\Var(y)$ as in
Section~\ref{optS}, and so on. 
Note that at each step we don't increase the error measure $\Var(y)$, so the
method converges to a (local) minimum. 
An alternative approach, 
depicted in Figure~\ref{overview3}, is to start the iterations by
initializing the noise budgets first. 
Note that if we begin our approach with the strategy matrix $S$
computed by the matrix mechanism in~\cite{LHRMG10}, we guarantee that
our output is at least as accurate as that in~\cite{LHRMG10}.  
Furthermore, at the end of each iteration in Figure~\ref{overview2},
the answer $y$ is consistent (see Section~\ref{ols}).    

In general, executing even one iteration can be somewhat costly: 
it requires, e.g., computing $R$ by solving a generalized least 
squares problem via matrix multiplication and inversion. 
So in practice we obtain efficient algorithms by applying the
framework to smaller optimization spaces. 
For example, if we fix $S$ then we can iterate only through Steps 2
and 3 of the approach.  

\begin{figure}[t]
\begin{center}
\includegraphics[width=2.8in]{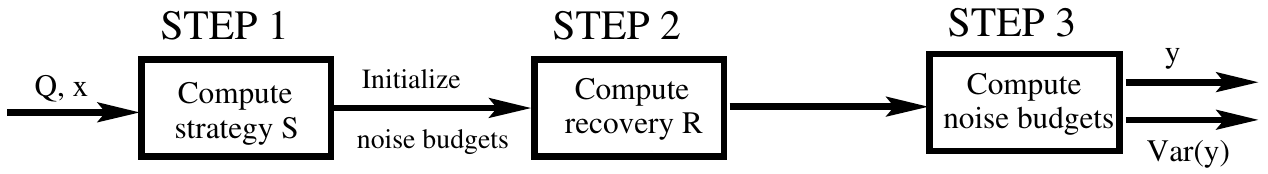}
\caption{Alternative iterative framework.}
\label{overview3}
\end{center}
\end{figure}
}

To summarize, our contributions are as follows:
\begin{itemize}
\item We propose a framework for minimizing the error of
  differentially private answers. It improves on the accuracy of
  existing strategies, at minimal computation cost.
\item We develop fast algorithms within this framework for
  marginal queries. Our algorithms compute consistent answers. 
In particular, when $Q$ is the set of all $k$-way marginals, we give
asymptotic bounds on the error of our mechanism; we are not aware of
any such analysis for the matrix mechanism.  
As a by-product, our analysis also improves the error bound for the
uniform noise case. 
\item We conduct an extensive experimental study on marginal query
  workloads and show that our framework reduces the error of
  existing strategies (including the Fourier strategy~\cite{BCDKST07} and the  
Cluster strategy~\cite{DWHL11}). 
\end{itemize}

\noindent
{\bf Organization.}
Section~\ref{defs} introduces the
necessary definitions for describing our framework. 
The optimization results required by Steps 2 and 3 are developed in
Section~\ref{framework}.  
In Section~\ref{consistency}, we describe novel results that allow us
to apply our framework to marginal queries in an efficient manner, 
and to compute consistent results. 
Our experimental study is presented in Section~\ref{sec:experiments}, and we
conclude in Section~\ref{concl}. 

\section{Definitions}\label{defs}

We begin by recalling the definition of differential privacy and some
fundamental mechanisms which satisfy this definition.

\begin{definition}[Differential privacy~\cite{DMNS06,DKMMN06}]
A randomized algorithm $\mathcal A$ satisfies $(\epsilon,
\delta)$-differential privacy if for all databases $D_1$ and $D_2$
differing in at most one element, and all measurable subsets $S
\subseteq \operatorname{Range}(\mathcal A)$,
$$\Pr[\mathcal A(D_1) \in S] \le e^{\epsilon} \cdot \Pr[\mathcal
  A(D_2) \in S] + \delta.$$
\end{definition}

We say that an algorithm satisfies $\epsilon$-differential privacy if
it satisfies $(\epsilon, 0)$-differential privacy.

%


\begin{definition}[$L^p$-sensitivity]
\label{def:lpsensitivity}
For $p \ge 1$ let the $L^p$-sensitivity $\Delta_p (f)$ of a function $f \colon D \rightarrow \mathbb R^q$ be defined as:
$$\Delta_p (f) = \max_{D_1, D_2} \|f(D_1) - f(D_2)\|_p,$$
for all $D_1$ and $D_2$ differing in at most one element.
Here, \mbox{$\|\cdot\|_p$} denotes the standard $L^p$ norm, i.e.,
$\|x\|_p = (\sum_{i = 1}^n |x_i|^{p})^{1/p}$ for $x \in \mathbb R^n$.
\end{definition}

We rely on the following two basic mechanisms to construct
differentially private algorithms:

\begin{theorem}[Laplace mechanism~\cite{DMNS06}]\label{thm:laplace-mechanism}
If $f$ is a function $f \colon D \rightarrow \mathbb R^q$, then
releasing $f$ with additive $q$-dimensional Laplace noise with variance $2
\left(\frac{\Delta_1(f)}{\epsilon}\right)^2$ in each component
satisfies $\epsilon$-differential privacy.
\end{theorem}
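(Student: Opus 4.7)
The plan is to verify the pointwise density ratio inequality for Laplace noise and then integrate over the measurable set $S$ in the definition of differential privacy. First I would normalize the Laplace parameters: the one-dimensional Laplace distribution with scale $b$ has density $\frac{1}{2b}e^{-|x|/b}$ and variance $2b^2$, so the stated variance $2(\Delta_1(f)/\epsilon)^2$ corresponds to independent coordinates each with scale $b = \Delta_1(f)/\epsilon$. The joint density at a point $y \in \mathbb{R}^q$ for the mechanism $\mathcal{A}(D) = f(D) + \nu$ is then $p_D(y) = \prod_{i=1}^q \frac{1}{2b} \exp(-|y_i - f(D)_i|/b)$.

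Next I would bound the pointwise ratio of densities for two adjacent databases $D_1, D_2$. Writing
\[
\frac{p_{D_1}(y)}{p_{D_2}(y)} = \exp\!\left(\frac{1}{b}\sum_{i=1}^{q} \bigl(|y_i - f(D_2)_i| - |y_i - f(D_1)_i|\bigr)\right),
\]
the reverse triangle inequality gives $|y_i - f(D_2)_i| - |y_i - f(D_1)_i| \leq |f(D_1)_i - f(D_2)_i|$ coordinatewise. Summing and using the definition of $L^1$-sensitivity yields
\[
\frac{p_{D_1}(y)}{p_{D_2}(y)} \leq \exp\!\left(\frac{\|f(D_1)-f(D_2)\|_1}{b}\right) \leq \exp\!\left(\frac{\Delta_1(f)}{b}\right) = e^{\epsilon}.
\]

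Finally, I would lift this pointwise bound to arbitrary measurable $S \subseteq \mathbb{R}^q$ by integration: $\Pr[\mathcal{A}(D_1)\in S] = \int_S p_{D_1}(y)\,dy \leq e^{\epsilon}\int_S p_{D_2}(y)\,dy = e^{\epsilon}\Pr[\mathcal{A}(D_2)\in S]$, which is exactly $\epsilon$-differential privacy (with $\delta = 0$).

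I do not expect any genuine obstacle here; the only place to be careful is the direction of the reverse triangle inequality (making sure one obtains an upper bound rather than a lower bound on the ratio) and ensuring that the bound on the sum of coordinatewise differences is tight precisely because $\Delta_1$ is defined with respect to the $L^1$ norm, which matches the product form of the multivariate Laplace density. This alignment between the noise norm and the sensitivity norm is what makes the argument pass; using $L^2$ sensitivity with Laplace noise, for instance, would not give the clean telescoping bound above.
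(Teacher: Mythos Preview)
Your proof is correct and is the standard argument for the Laplace mechanism. The paper itself does not prove this theorem at all: it is stated as a cited background result from~\cite{DMNS06}, so there is no in-paper proof to compare against.
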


\begin{theorem}[Gaussian mechanism~\cite{DKMMN06, MM09}]\label{thm:gaussian-mechanism}
If $f$ is a function $f \colon D \rightarrow \mathbb R^q$, then
releasing $f$ with additive $q$-dimensional Gaussian noise with variance
$\left(2 \Delta^2_2 (f)\frac{\log(2 /
\delta)}{\epsilon^2}\right)$ in each component
satisfies $(\epsilon, \delta)$-differential privacy.
\end{theorem}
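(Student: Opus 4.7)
The plan is to reduce the privacy guarantee to a tail bound on the \emph{privacy loss random variable}. Fix neighboring databases $D_1, D_2$ and set $\mu_1 = f(D_1)$, $\mu_2 = f(D_2)$, $v = \mu_1 - \mu_2$, so that $\|v\|_2 \le \Delta_2(f)$. Write $\sigma^2 = 2\Delta_2^2(f)\log(2/\delta)/\epsilon^2$ for the per-coordinate variance in the statement. Under the mechanism, the outputs on $D_1$ and $D_2$ have spherical-Gaussian densities $p_i(x) \propto \exp\bigl(-\|x-\mu_i\|_2^2/(2\sigma^2)\bigr)$. The first step is to write down
\[ L(x) \;=\; \log\frac{p_1(x)}{p_2(x)} \;=\; \frac{1}{2\sigma^2}\bigl(\|x-\mu_2\|_2^2 - \|x-\mu_1\|_2^2\bigr), \]
and, after expanding and parametrizing $x = \mu_1 + Z$ with $Z \sim \mathcal N(0,\sigma^2 I_q)$ (i.e.\ $x$ drawn from the mechanism on $D_1$), to simplify to $L = \langle Z,v\rangle/\sigma^2 + \|v\|_2^2/(2\sigma^2)$.

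Since $\langle Z, v\rangle$ is a scalar Gaussian with mean $0$ and variance $\sigma^2\|v\|_2^2$, the random variable $L$ is itself Gaussian with mean $m := \|v\|_2^2/(2\sigma^2)$ and variance $2m$. The second step is to bound $\Pr[L > \epsilon]$ via the standard one-sided Gaussian tail estimate $\Pr[\mathcal N(0,1) > t] \le \exp(-t^2/2)$ applied at $t = (\epsilon - m)/\sqrt{2m}$. Substituting $\|v\|_2 \le \Delta_2(f)$ and the chosen $\sigma^2$ gives $m \le \epsilon^2/(4\log(2/\delta))$, so $t \ge \sqrt{2\log(2/\delta)} - o(1)$ and hence $\exp(-t^2/2) \le \delta$ (the spare factor of $2$ in $\log(2/\delta)$ is exactly what absorbs the lower-order terms from $m$).

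The third step converts this tail bound into the $(\epsilon, \delta)$-differential privacy guarantee by the standard integration argument: for any measurable $S$, partition $S = (S \cap \{L \le \epsilon\}) \cup (S \cap \{L > \epsilon\})$, use $p_1 \le e^\epsilon p_2$ pointwise on the first part and bound the integral of $p_1$ over the second part by $\Pr_{x \sim p_1}[L(x) > \epsilon] \le \delta$. Swapping the roles of $D_1$ and $D_2$ yields the symmetric inequality.

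The main technical obstacle is the tail-bound calculation in the second step: the constant $2$ inside $\log(2/\delta)$ is tight, so one has to track the correction term $m$ (i.e.\ the drift of $L$) carefully rather than simply bounding $\Pr[L > \epsilon]$ by the probability that a centered Gaussian exceeds $\epsilon$. Everything else is algebraic manipulation of Gaussian densities and a routine integration split.
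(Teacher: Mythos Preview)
The paper does not prove this theorem: it is quoted from \cite{DKMMN06, MM09} as a black-box primitive and used without proof, so there is no ``paper's own proof'' to compare against. Your approach via the privacy-loss random variable is the standard one and is correct in outline.

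One remark on the tail-bound step. Your claim that ``$t \ge \sqrt{2\log(2/\delta)} - o(1)$'' and that ``the spare factor of $2$ in $\log(2/\delta)$ is exactly what absorbs the lower-order terms from $m$'' hides a side condition on $\epsilon$. Concretely, with $L = \log(2/\delta)$ and worst-case $m = \epsilon^2/(4L)$ you get $t = \sqrt{2L}\,(1 - \epsilon/(4L))$, hence
\[
\exp(-t^2/2) \;=\; \tfrac{\delta}{2}\,\exp\!\bigl(\epsilon/2 - \epsilon^2/(16L)\bigr),
\]
which is $\le \delta$ only when $\epsilon/2 - \epsilon^2/(16L) \le \ln 2$, i.e.\ roughly $\epsilon \lesssim 2\ln 2$. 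The original statements of the Gaussian mechanism in the cited sources carry such a restriction (typically $\epsilon \in (0,1)$); the paper's statement omits it, but your proof should make it explicit rather than writing ``$o(1)$'' for a fixed, non-asymptotic correction term.
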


\noindent
{\bf Query workloads, consistency, strategy and recovery.}
As mentioned in Section~\ref{intro}, we represent the database as a vector
$x \in \mathbb R^N$ and the query workload as a matrix $Q \in \mathbb R^{q \times N}$: each row $Q_{i\cdot}$, $1\leq i\leq q,$
is a linear query over database $x.$
It is easy to see that the
sensitivity of $Q$ is $\Delta_p(Q) = \max_{j=1}^N \|Q_{\cdot j}\|_p,$
where $Q_{\cdot j}$ denotes the $j$th column of $Q$.\footnote{We assume that each individual contributes a weight of
  1 to some entry of $x$, in line with prior work.
Other cases can be handled by rescaling the sensitivity accordingly.}
One differentially private answer to $Q$ is a vector $y = Qx + \tau$, where $\tau \in \mathbb R^q$ is the noise vector drawn from
an appropriate (Laplace or Gaussian) distribution.
Our formal goal is to minimize the variance of a given linear
functional $a^T \cdot \Var(y)$ for some fixed vector $a \in \mathbb
R_+^q$, while guaranteeing  differential privacy.
For example, if $a = \overrightarrow{\mathbf 1}$
we minimize the sum of the variances of noise over all queries.
In particular, we study workloads $Q$ that consist of marginals
over $x$, such as the set of all $k$-way marginals, for some small integer $k$.

\begin{definition}\label{def-consist}
A noisy output $y = Qx + \tau$ is {\em consistent} if there exists at
least one vector $x^c$ such that $y=Qx^c.$
\end{definition}

We decompose a query workload $Q$ into a {\em strategy} matrix $S \in \mathbb R^{m \times N}$,
and a {\em recovery} matrix $R \in \mathbb R^{q \times m}$,  such that $Q = RS$. The query answer $y$ is then computed as $y=Rz$,
where $z=Sx + \nu$ is the noisy answer to $S$ (hence, $\tau = R\nu$).
In general, there are many possible ways to pick $R$ and $S$ given $Q$, and our goal will be to minimize the resulting $\Var(y).$

\section{Our Framework}\label{framework}
In this section we solve the optimization problems
required by Steps 2 and 3 of our framework from
Figure~\ref{overview2}.

\subsection{Optimal Noise Budgeting (Step 2)}\label{noisebudget}

A novel part of our scheme is a special purpose \textit{budgeting mechanism}:
For each row $S_{i\cdot}$ in the strategy $S$, we release
$z_i=S_{i\cdot} x + \nu_i$, where $\nu_i$ is drawn from a Laplace
distribution that depends on a value $\eps_i$.
We show how to choose the values $\eps_i$ optimally so that the overall method satisfies
$\eps$-differential privacy and the resulting noise is minimized.
We also design an approach based on {\em grouping} rows of the
strategy matrix $S$,
which allows us to compute the optimal $\eps_i$'s efficiently.

 \begin{proposition}\label{prp:nonuniform-budgeting}
Let $S$ be an $m\times N$ strategy matrix, and let
$\eps_1,\ldots,\eps_m$ be a set of $m$ non-negative values.
Define the noisy answer to $S$ to be an $m$-dimensional vector $z$ such that
$z_i=S_{i\cdot} x + \nu_i$, $1\leq i\leq m.$

(i) If $\nu_i$ is drawn from the Laplace distribution with variance $\frac{2}{\eps_i^2},$ then
$z$ satisfies $\alpha$-differential privacy, where $\alpha = 2\max_{j=1}^N (\sum_{i=1}^m |S_{ij}|\eps_i)$.

(ii) If $\nu_i$ is drawn from the Gaussian distribution with variance
$2\frac{\log (2/\delta)}{\eps_i^2},$ and $\alpha = 2\max_{j=1}^N
\sqrt{\sum_{i=1}^m S^2_{ij}\eps^2_i}$,
$z$ satisfies $(\alpha,\delta)$-differential privacy.
\end{proposition}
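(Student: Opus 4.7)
My plan is to prove both parts by a single unifying reduction: rescale the rows of $S$ to normalize the per-coordinate noise, and then invoke the standard Laplace (resp.\ Gaussian) mechanism of Theorems~\ref{thm:laplace-mechanism} and~\ref{thm:gaussian-mechanism} applied to the rescaled function. Concretely, for each coordinate $i$ I would introduce a scaling factor $c_i$ chosen so that after applying the map $z \mapsto (c_i z_i)_i$ the noise on coordinate $i$ becomes the ``canonical'' Laplace or Gaussian noise to which the textbook theorem applies. Since this coordinatewise rescaling is an invertible post-processing of the released vector, the privacy guarantee of the rescaled output transfers to $z$ unchanged.

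For part (i), I would choose $c_i = \eps_i$ and consider $\tilde S$ with rows $\tilde S_{i\cdot} = \eps_i S_{i\cdot}$. Then $\tilde z_i = \tilde S_{i\cdot} x + \tilde \nu_i$ with $\tilde \nu_i \sim \Lap(1)$ (variance $2$), and $z$ is obtained from $\tilde z$ by dividing coordinate $i$ by $\eps_i$. By Theorem~\ref{thm:laplace-mechanism}, $\tilde z$ is $\Delta_1(\tilde S)$-differentially private. It then remains to compute $\Delta_1(\tilde S)$. For neighboring $x,x'$ (which, under the ``differ in one element'' convention used here together with the per-record weight-one assumption in Section~\ref{defs}, satisfy $x-x' = e_j - e_k$ for some indices $j,k$), the triangle inequality gives
\[
\|\tilde S(x-x')\|_1 \;\le\; \|\tilde S_{\cdot j}\|_1 + \|\tilde S_{\cdot k}\|_1 \;\le\; 2\max_j \sum_{i=1}^m |S_{ij}|\,\eps_i,
\]
which is exactly $\alpha$. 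Invertibility of post-processing finishes the argument.

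For part (ii), the same reduction works with $c_i = \eps_i/\sqrt{2\log(2/\delta)}$, so that after rescaling each coordinate carries $N(0,1)$ noise (or, equivalently, keep the factor inside and apply Theorem~\ref{thm:gaussian-mechanism} with $\sigma^2 = 2\log(2/\delta)$). The $L^2$-sensitivity computation is the direct analogue of the Laplace case: for $x-x' = e_j - e_k$,
\[
\|\tilde S(x-x')\|_2 \;\le\; \|\tilde S_{\cdot j}\|_2 + \|\tilde S_{\cdot k}\|_2 \;\le\; 2\max_j \sqrt{\textstyle\sum_{i=1}^m S_{ij}^2\,\eps_i^2},
\]
and plugging this into Theorem~\ref{thm:gaussian-mechanism} together with the prescribed per-coordinate variance $2\log(2/\delta)/\eps_i^2$ produces exactly the claimed $\alpha$.

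I do not expect a serious obstacle: the only subtle point is being explicit that neighboring databases contribute a net column-difference of at most two columns of $S$ (the factor $2$ in $\alpha$), and that coordinatewise rescaling is bijective and hence preserves differential privacy. Both are standard, so the proof should be short and mostly bookkeeping.
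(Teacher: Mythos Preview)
Your proposal is correct and is essentially the same argument as the paper's: the paper writes the rescaling as the decomposition $S = D^{-1}(DS)$ with $D = \diag(\eps_1,\dots,\eps_m)$, applies the standard Laplace/Gaussian mechanism to $f(x)=(DS)x$, and then observes that multiplying by $D^{-1}$ is a post-processing that yields the per-coordinate variances $2/\eps_i^2$ (resp.\ $2\log(2/\delta)/\eps_i^2$). Your explicit identification of the factor $2$ as arising from neighboring databases differing by $e_j - e_k$ matches the paper's sensitivity bound $\Delta_p(DS) \le 2\max_j \|(DS)_{\cdot j}\|_p$.
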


\begin{proof} 
\ifnum\final=0
We only show (i), the proof for (ii) is shown in~\ref{app:approx-dp-budgeting}.
\else
\fi
We decompose $S$ as $D^{-1}DS$ where $D$ is the diagonal matrix
$D = \diag(\eps_1, \ldots \eps_m)$.
We now consider the $L^p$ sensitivity of the function $f(x) = (DS)x$.
From Definition~\ref{def:lpsensitivity},
we have
\[ \Delta_p(f) \leq 2\max_{j=1}^{N} \| (DS)_{\cdot j}\|_p =
2\max_{j=1}^{N} (\sum_{i=1}^{m} |S_{ij}\eps_i|^p)^{1/p}
\]
Thus, adding noise with variance proportional to
$(2\frac{\Delta_p(f)}{\alpha})^2$ provides
$\alpha$-differential privacy (via Theorem~\ref{thm:laplace-mechanism}
with $p=1$) or $(\alpha,\delta)$-differential privacy (via
Theorem~\ref{thm:gaussian-mechanism} with $p=2$).
Finally, multiplying by $D^{-1}$ has the effect of rescaling the
variance in each component:
  the $i$th component now has variance proportional to
  $(\frac{\Delta_p(f)}{\alpha \eps_i})^2$.
Setting $\alpha = \Delta_p(f)$ for $p=1$ or $p=2$
and applying the correct scaling constants gives the claimed result.
%
\end{proof}

\ifnum\final=0
The proofs for (ii) and other results are omitted for brevity.
\fi

Recall that the output is computed as $y =Rz$.
Our goal is to choose values $\eps_i$ that
minimize the variance $a\transpose \Var(y) = a\transpose \Var(R\nu).$
We detail this for Laplace mechanism:
\eat{
\begin{align*}
a\transpose \cdot \Var(R\nu)  &
= \sum_{i = 1}^{q} a_i \cdot
  \Var\left(\sum_{j = 1}^m R_{ij} \cdot
  \Lap\bigg(\frac{1}{\eps_j}\bigg)\right) \\ &= 2
  \sum_{i = 1}^{q} a_i \sum_{j = 1}^m \frac{R^2_{ij}}{\eps^2_j} \\
&= 2 \sum_{i = 1}^m \frac{1}{\eps^2_i} \sum_{j = 1}^q a_j R^2_{ji}
\end{align*}
}
$$a\transpose \cdot \Var(R\nu) = 2 \sum_{i = 1}^{q} a_i \sum_{j = 1}^m \frac{R^2_{ij}}{\eps^2_j}
 = 2 \sum_{i = 1}^m \frac{1}{\eps^2_i} \sum_{j = 1}^q a_j R^2_{ji}.$$
Let $b_i$ $=$ $2\sum_{j = 1}^q a_j R^2_{ji}$.
By Proposition~\ref{prp:nonuniform-budgeting}, it follows that the optimal noise
budgeting $\{\eps_i\}$ is the solution to the following optimization problem:
\begin{align}
\text{Minimize: } & \textstyle
\sum_{i = 1}^m \frac{b_i}{\eps_i^2} \label{optpb}\\
\mbox{Subject to: } & \textstyle
\sum_{i=1}^m |S_{ij}|\eps_i\leq \eps,\ \ 1\leq j\leq N. \label{ineq:privacy}\\
& \epsilon_i \ge 0, \ \ 1 \le i \le m \label{ineq:eps}
\end{align}
Because all $b_i$'s are non-negative, the objective function is convex.
The body defined by the linear inequalities is also convex. The resulting problem can thus
 be solved using a convex optimization package that implements, e.g., interior point methods.
Such methods require time polynomial in $m$, $N,$ and the required accuracy of the solution~\cite{BV04}.
\eat{
Thus, any local minimum for the problem above is also a global minimum.
If we use the Kuhn-Tucker method to find a local minimum, then the corresponding generalized Lagrange function is:
$$\Lambda(\lambda_1,\ldots,\lambda_N, \eps_1, \dots, \eps_m) =
 \left(\sum\limits_{i = 1}^m \frac{b_i}{\eps^2_i} \right) +
 \sum_{j=1}^N\lambda_j\left(\sum_{i = 1}^m |S_{ij}|\eps_i - \eps\right).$$
Taking partial derivatives $\frac{\partial}{\partial \eps_i}$ we
have a condition for the extremum:
 $$ \frac{2 b_i}{\eps^3_i} = \sum_{j=1}^N\lambda_j|S_{ij}| \qquad \forall i \in [m],$$
so we obtain $\eps_i$ $=$
  $\Big(\frac{2b_i}{\sum_{j=1}^N\lambda_j|S_{ij}|}\Big)^{1/3}.$
We substitute these $\eps_i$'s into the other Kuhn-Tucker conditions:
$$\lambda_j\left(\sum_{i = 1}^m |S_{ij}|\eps_i - \eps\right) = 0,$$
to obtain a system of $N$ non-linear equations in $N$ variables $\lambda_j.$ The system
admits a solution with $\lambda_j\geq 0,\ \forall j$, and such that $\sum_{i=1}^m |S_{ij}|\eps_i\leq \eps.$
One can approximate the solution $\{\lambda_j\}$ to this system via the iterative Newton method, then compute the
corresponding $\{\eps_i\}.$
\grinote{Sounds a bit vague.}
This iterative approach requires inverting an $N\times N$ Jacobian matrix during
each iteration.\grinote{An alternative approach is to use an SDP, as in Matrix Mechanism, which also results in an iterative process for $L_1$-sensitivity, but has provably polynomial running time for $L_2$.}
}

\medskip
\noindent
{\bf Efficient Solution via Grouping.}
Convex optimization solvers may require a large number of iterations and be too inefficient
for databases of moderate dimensionality.
However, for most of the frequently used strategy matrices, the
optimization problem can be significantly simplified, if we partition
the rows of the strategy matrix $S$ into {\em groups}, and define the
corresponding values $\eps_i$ to be the same for all rows in a group.
We show that the groups can be chosen in such way that all
conditions $\sum_{i=1}^m |S_{ij}|\eps_i\leq \eps$ become {\em identical}
once we set the $\eps_i$'s to be equal in each group, which
leads to a closed form solution.
This approach was implicitly used in~\cite{CPSSY12}.
We show that this concept
can be applied to a larger class of strategy
matrices.
The optimal solution for the simplified problem is a feasible solution
for the general problem.
If recovery matrix
$R$ satisfies a
certain property (as is the case for {\em all} matrices we consider),
then the optimal solution
for the simplified problem is also guaranteed optimal for the general
case. In particular, we
find optimal noise budgets for
strategy/recovery methods such as Fourier~\cite{BCDKST07} and
clustering~\cite{DWHL11}.

\begin{definition}\label{grouping}
Let $S$ be an $m\times N$ strategy matrix. We say that $S$ satisfies
the {\em grouping property} if there exists a grouping function over its rows
$G: [m]\rightarrow [g]$, $g \leq m$, such that the following two
conditions are satisfied:

\noindent
--- row-wise disjointness:
for any two rows $i_1,i_2$ of $S$ with $G(i_1)=G(i_2)$ and for any
column $j$, $S_{i_1j} S_{i_2j} = 0;$

\noindent
--- bounded column norm: for any group $r$, and for any two columns
$j_1,j_2$, we have
$\max_{i: G(i)=r} |S_{ij_1}| = \max_{i: G(i)=r} |S_{ij_2}| = C_r.$

The minimum $g$ for which $S$ has a grouping function $G$ is called the {\em grouping number} of $S$.
\end{definition}

Together, the two conditions in Definition~\ref{grouping} imply that
any column of $S$ contains at most one non-zero value from each group,
and that that value is the same (within a group) for all columns.
Hence, not every $S$ can meet this definition: while we could put every
row in a singleton group, we also then require that the magnitude
of all non-zero entries in the row are identical.
Nevertheless, as we show below,
many commonly used matrices are groupable.

\smallskip
\noindent
{\bf Example.}
Matrix $S$ in Figure~\ref{toyex}(c) has grouping number $g=1$: each
column has exactly one entry equal  to 1, so $C_1 = 1$.
On the other hand, if $S=Q$ is the matrix in Figure~\ref{toyex}(b),
the grouping number is 2: we define one group containing
the first two rows, and another containing the last four rows.
We have $C_1=C_2=1.$
Note that, e.g., the first and third rows cannot be grouped
together, since $Q_{11}Q_{31}=1\neq 0.$
We now apply this definition to the other strategy matrices proposed:

\smallskip
\noindent
{\em Base counts.}
As noted in the introduction, directly materializing the noisy version
of $x$ is equivalent to $S=I.$ In this case, all rows form a single group; hence, $g=1$ and
$C_1=1$.

\smallskip
\noindent
{\em Collections of marginals.}
When $S$ is a set of marginals, all rows that compute
the cells in the same marginal can be grouped together, as in the
above example.
Hence, the number of groups $g$ is the number of marginals computed; and $C_r=1$ for
each group $r$.

\smallskip
\noindent
{\em Hierarchical structures.}
When $S$ represents a hierarchy over $x$, all rows that compute the counts at the same level
in the hierarchy form one group. Hence, the grouping number $g$ is
the depth of the hierarchy and all $C_r$ values are 1.
Specifically, when $S$ represents a binary tree over $x$,
the grouping number is $g = \lceil\log_2 N\rceil.$
The same essentially holds for the one-dimensional Haar wavelet (here,
$g=\lceil \log_2 N\rceil +1$).
For higher dimensional wavelets, the grouping number grows
exponentially with the dimension of the wavelet transform.

\smallskip
\noindent
{\em Fourier transform.}
The Fourier transform (discussed in more detail in Section~\ref{sec:fourierdef})
is dense: every entry is non-zero and has absolute value $2^{-d/2}$. In this case, each row forms
its own group, the grouping number is $N$, and $C_r = 2^{-d/2}$ for any group $r$.

\smallskip
\noindent
{\em Sparse random projections.}
Sketches are sparse random projections that partition the data $x$
into buckets, repeated $t$ times~\cite{CPST12}.
All entries in the sketch matrix $S$ are $\{-1,0,+1\}$. In this case, all rows that define one
particular partition of the data form one group, so $g=t$ and $C_r=1$.

\smallskip
\noindent
{\em Arbitrary strategies $S$.}
If $S$ is groupable, we can greedily find a grouping as follows: start a
group with an arbitrary row, and try to add each remaining row to existing
groups; if a row cannot be added to an existing group, a new group is
created for it.
While this may not result in a minimum $g$, any grouping suffices for
our purposes.
We do not discuss the greedy approach further,
since all the strategies we study can be grouped directly as discussed above.

\begin{definition}\label{Rgrouping}
Let $S$ be an $m\times N$ strategy matrix with grouping function $G$. Let $R$ be a corresponding
$q\times m$ recovery matrix. We say that $R$ is {\em consistent with $G$} if for any
rows $i_1,i_2$ of $S$ with $G(i_1)=G(i_2)$, we have $b_{i_1} = b_{i_2}$ (where $b_i$ $=$ $2\sum_{j = 1}^q a_j R^2_{ji}$ are as in objective function~\eqref{optpb}).
\end{definition}

When $Q$ is a set of marginals and $a = \overrightarrow{\mathbf 1},$ it is easy to verify that
$R$ is consistent with the optimal grouping of $S$, for all the choices of $S$ considered in prior work: $S=I$, $S=Q$, $S=$Fourier transform, and $S=$ strategy marginals computed by clustering~\cite{DWHL11} (here, $R$ aggregates cells of the centroid marginal to compute each
of the marginals assigned to a cluster).

\smallskip
The next result follows directly from the properties of the grouping function.

\begin{lemma}\label{groupinglemma}
Let $S$ be a strategy matrix with grouping function $G.$ There is a feasible solution to the
optimization problem~\eqref{optpb} -- \eqref{ineq:eps} such that
for each group $r$ and for all pairs of rows $i_1,i_2$ with $G(i_1) =
G(i_2) = r$, we have $\eps_{i_1} = \eps_{i_2}.$
Moreover, all privacy conditions \eqref{ineq:privacy} are equivalent,
and can be satisfied with equality.

If $R$ is consistent with $G$, then the above solution is optimal for the problem defined
by~\eqref{optpb} -- \eqref{ineq:eps}.
\end{lemma}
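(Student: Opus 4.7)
The plan is to split the proof into the two halves of the statement: (i) existence of a within-group-constant feasible point that makes all privacy constraints tight and pairwise equivalent, which uses only the two grouping properties of Definition~\ref{grouping}, and (ii) its optimality under the additional hypothesis that $R$ is consistent with $G$.

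For (i), I would first unpack the grouping hypothesis into the following structural fact: for every group $r$ and every column $j$, there is exactly one row $i(r,j)\in G^{-1}(r)$ with $|S_{i(r,j),j}|=C_r$, while every other row in $G^{-1}(r)$ vanishes on column $j$. Indeed, row-wise disjointness forces at most one non-zero entry per column per group, while the bounded column-norm condition (with value $C_r>0$) forces the per-column maximum to be attained. Consequently, under any assignment of the form $\eps_i=\alpha_{G(i)}$, the left-hand side of \eqref{ineq:privacy} equals $\sum_{r=1}^{g} C_r\,\alpha_r$, independent of $j$, so the $N$ privacy constraints are identical and collapse to the single inequality $\sum_r C_r\alpha_r\le\eps$. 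A feasible assignment making them tight is then obtained by solving the induced one-constraint convex program via a single Lagrange multiplier.

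For (ii), consistency gives $b_i=\tilde b_{G(i)}$ and the objective \eqref{optpb} decomposes group-wise as $\sum_r\tilde b_r\sum_{i\in G^{-1}(r)}\eps_i^{-2}$. Because the objective is convex on the positive orthant and the feasible set is a convex polytope, it suffices to produce a KKT certificate for the symmetric point obtained in (i). I would first solve the simplified problem in closed form, yielding $\alpha_r^\star=(2k_r\tilde b_r/(\mu C_r))^{1/3}$ with multiplier $\mu>0$ fixed by tightness, and then exhibit dual multipliers $\lambda_1,\dots,\lambda_N\ge 0$ certifying stationarity of the full problem at the primal point $\eps_i\equiv\alpha_{G(i)}^\star$. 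A direct computation reduces the dual conditions to finding non-negative $\lambda_j$ with $\sum_{j\in J_{r,s}}\lambda_j=\mu/k_r$ for every group $r$ and every row $s$ within it; complementary slackness is automatic since every primal constraint is already tight, and summing over $s$ within any one group yields the consistent aggregate $\sum_j\lambda_j=\mu$.

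The main obstacle is producing such a non-negative $\lambda$. For every groupable strategy enumerated after Definition~\ref{grouping}---base counts, collections of marginals, hierarchies, the Fourier transform, and sparse random projections---each within-group partition is balanced in the sense that $|J_{r,s}|=N/k_r$, so the uniform choice $\lambda_j=\mu/N$ already satisfies every group-sum constraint and is manifestly non-negative. An equivalent primal-only argument avoids duality: given any feasible $\{\eps_i\}$, define $\alpha_r:=\bigl(k_r/\!\sum_{i\in G^{-1}(r)}\eps_i^{-2}\bigr)^{1/2}$, the power-mean $M_{-2}$ within each group. By construction the full-problem objective at $\{\eps_i\}$ equals the simplified objective at $\{\alpha_r\}$, while the power-mean inequality $M_{-2}\le M_1$ combined with summing the $N$ privacy constraints (using balancedness) gives $\sum_r C_r\alpha_r\le\eps$, so $\{\alpha_r\}$ is feasible for the simplified problem. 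Either derivation supplies the nontrivial direction of inequality between the two optima, which combined with the trivial restriction bound yields the claimed optimality.
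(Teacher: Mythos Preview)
Your treatment of part~(i) is essentially the paper's argument spelled out in more detail: the paper derives the same collapse of the $N$ privacy constraints into the single equality $\sum_r C_r\eta_r=\eps$ in one sentence.

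For part~(ii) the comparison is more interesting. The paper's entire proof of optimality is the assertion that, under consistency of $R$ with $G$, ``we can change any optimal solution of \eqref{optpb}--\eqref{ineq:eps} into a solution in which all $\eps$ values in a group are equal, without increasing the objective function,'' followed by ``We omit a formal proof here.'' So your proposal is already far more substantive than what the paper supplies. The extra balancedness hypothesis you single out---that every row in group $r$ has exactly $N/k_r$ nonzero entries---is not part of Definition~\ref{grouping}, but you are right to flag it: without it the claim is in fact false. For example, take $N=3$, $m=5$, group~$1=\{1,2\}$, group~$2=\{3,4,5\}$, and
\[
S=\begin{pmatrix}1&1&0\\0&0&1\\1&0&0\\0&1&0\\0&0&1\end{pmatrix},\qquad b_1=\cdots=b_5=1.
\]
This $S$ satisfies Definition~\ref{grouping} with $C_1=C_2=1$, and $R$ is trivially consistent with $G$. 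The three privacy constraints are $\eps_1+\eps_3\le\eps$, $\eps_1+\eps_4\le\eps$, $\eps_2+\eps_5\le\eps$; at $\eps=1$ the group-symmetric optimum of \eqref{optpb} is approximately $19.73$, whereas the unrestricted optimum (attained at $\eps_1\approx0.4425$, $\eps_2=0.5$, $\eps_3=\eps_4\approx0.5575$, $\eps_5=0.5$) is approximately $19.54$. Both your KKT construction and your $M_{-2}$ averaging argument are correct under balancedness, and since every concrete strategy enumerated after Definition~\ref{grouping} (identity, marginals, hierarchies/wavelets, Fourier, sparse projections) is balanced, your argument covers every application the paper makes of the lemma. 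In short: you have not proved Lemma~\ref{groupinglemma} in the generality in which it is stated, but neither has the paper, and the counterexample shows that generality is unattainable.
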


\begin{proof}
Let $\eta = \eta_1,\ldots,\eta_g$ be the noise budgets corresponding to the
$g$ groups of $S$; i.e., all $\eps$ values for the rows
in group 1 are equal to $\eta_1$, etc. Because of the grouping property, each condition~\eqref{ineq:privacy}
becomes $\sum_{i=1}^g C_i\eta_i \leq \eps,$ where $C_i$ is the value defined by the bounded column norm for
the group $i$ (recall Definition~\ref{grouping}). Since the objective function is a minimization, we can make this inequality an equality. Clearly,
$\{\eta_i\}_i$ are a feasible solution for~\eqref{optpb} -- \eqref{ineq:eps}.

If $R$ is consistent with $G$, we can change any optimal solution
of~\eqref{optpb} -- \eqref{ineq:eps} into a solution in which all $\eps$ values in a group are equal, without increasing the objective function.
We omit a formal proof here.
\end{proof}

Thus, when $S$ has grouping function $G$, we can write a simpler optimization problem for noise budgeting:
\begin{align}
\text{Minimize: } & \textstyle \sum_{i = 1}^g \frac{\sum_{r: G(r)=i} b_r}{\eta_i^2} \label{soptpb} \\
\mbox{Subject to: } & \textstyle \sum_{i=1}^g C_i\eta_i = \eps. \label{eq:privacy}\\
& \eta_i \ge 0, \ \ 1 \le i \le m \label{ineq:eta}
\end{align}
Since there is now just a single constraint on the $\eta_i$s,
we can solve this via a simple Lagrange multiplier method.
The corresponding Lagrange function is:
$$\textstyle
\Lambda(\lambda, \eta) =
 \big(\sum\limits_{i = 1}^g \frac{\sum_{r: G(r)=i} b_r}{\eta^2_i} \big) +
 \lambda\big(\sum\limits_{i = 1}^g C_i\eta_i - \eps\big).$$

\noindent
Setting the partial derivatives $\frac{\partial}{\partial \eta_i}$ to
zero, we obtain $\eta_i = \big(\frac{2}{\lambda C_i} \sum_{r: G(r)=i} b_r
\big)^{1/3}.$
By the privacy constraint (\ref{eq:privacy}), $\sum\limits_{i = 1}^g \left(\frac{2 C_i^2}{\lambda}\sum_{r: G(r)=i} b_r\right)^{1/3} = \eps$ and thus:
\[ \textstyle
  \lambda = \frac{2}{\eps^3}
   \big(\sum\limits_{i = 1}^g \big(C_i^2\sum\limits_{r: G(r)=i} b_r\big)^{1/3}\big)^3
%
\]
\begin{corollary}~\label{cor:noisebudget-variance}
In the case when all values $C_i$ are equal to the same value $C$ the
optimum value of the objective function is equal to
$\frac{C^2}{\epsilon^2} \left(\sum_{i = 1}^g s_i^{1/3}\right)^3$,
where $s_i = \sum_{r \colon G(r) = i} b_r$.
For $(\epsilon, \delta)$-differential privacy the corresponding value
of the objective function is equal to
$\frac{2 C^2 \log(2/\delta)}{\epsilon^2} \left(\sum_{i = 1}^g
\sqrt{s_i}\right)^2$.
\end{corollary}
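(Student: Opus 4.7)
The Laplace claim is essentially a matter of plugging $C_i = C$ into the closed form already derived above the corollary, so my first step is purely algebraic. Recall that the Lagrange analysis of the grouped optimization (\ref{soptpb})--(\ref{ineq:eta}) gave
\[
\eta_i = \Bigl(\tfrac{2 s_i}{\lambda C_i}\Bigr)^{1/3}, \qquad
\lambda = \tfrac{2}{\epsilon^3}\Bigl(\sum_{i=1}^g (C_i^2 s_i)^{1/3}\Bigr)^3.
\]
Setting $C_i = C$ collapses $\lambda$ to $\tfrac{2C^2}{\epsilon^3}\bigl(\sum_i s_i^{1/3}\bigr)^3$ and hence $\eta_i = \bigl(s_i/s^{*}\bigr)^{1/3}\cdot \epsilon/(C\sum_j s_j^{1/3})$ (up to constants), so that $\sum_i s_i/\eta_i^2$ factors into $(\lambda C/2)^{2/3}\sum_i s_i^{1/3}$. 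Substituting $\lambda$ and simplifying yields $\tfrac{C^2}{\epsilon^2}\bigl(\sum_i s_i^{1/3}\bigr)^3$. This part is entirely mechanical; the only thing to be careful about is tracking the cube roots and the exponent on $C$.

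For the Gaussian clause I would redo the whole grouped optimization with the variance and sensitivity from Proposition~\ref{prp:nonuniform-budgeting}(ii). The objective becomes $\log(2/\delta)\sum_i s_i/\eta_i^2$ (the factor $\log(2/\delta)$ replacing the $1$ that was implicit in the Laplace variance), and the key change is that the privacy constraint is now \emph{quadratic}: by the bounded-column-norm property, every column of $S$ contributes $\sum_r C_r^2 \eta_r^2$ to $\sum_i S_{ij}^2 \epsilon_i^2$, so all $N$ constraints again collapse to a single equation, this time $\sum_r C_r^2 \eta_r^2 = c\,\epsilon^2$ (with $c$ determined by the factor in Proposition~\ref{prp:nonuniform-budgeting}(ii)). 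The analogue of Lemma~\ref{groupinglemma} carries over verbatim because the grouping argument only uses row-wise disjointness plus equal column magnitudes, which hold regardless of whether we square or not.

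With a single equality constraint I would then optimize $\log(2/\delta)\sum_r s_r/\eta_r^2$ subject to $\sum_r \eta_r^2 = c\,\epsilon^2/C^2$ (using $C_r = C$). The cleanest route is Cauchy--Schwarz applied in the form
\[
\Bigl(\sum_r \sqrt{s_r}\Bigr)^2 \;=\; \Bigl(\sum_r \tfrac{\sqrt{s_r}}{\eta_r}\cdot \eta_r\Bigr)^2 \;\le\; \Bigl(\sum_r \tfrac{s_r}{\eta_r^2}\Bigr)\Bigl(\sum_r \eta_r^2\Bigr),
\]
with equality when $\eta_r^2 \propto \sqrt{s_r}$. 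Dividing through by the constraint and multiplying by $\log(2/\delta)$ delivers the claimed $\tfrac{2 C^2 \log(2/\delta)}{\epsilon^2}\bigl(\sum_r\sqrt{s_r}\bigr)^2$ form (a Lagrange-multiplier computation would reach the same answer and also hand back the optimal $\eta_r$'s explicitly).

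The main place where I expect to slow down is not the optimization itself but matching the multiplicative constant $2$ in the Gaussian statement: this depends on precisely how the factor-of-$2$ in $\alpha = 2\max_j\sqrt{\sum_i S_{ij}^2 \epsilon_i^2}$ from Proposition~\ref{prp:nonuniform-budgeting}(ii) is transported into the grouped constraint $\sum_r C_r^2\eta_r^2 = c\,\epsilon^2$. I would resolve this by writing out the equality $\alpha = \epsilon$ explicitly for the grouped solution, solving for $c$, and only then substituting into the Cauchy--Schwarz bound, so that the leading constant falls out unambiguously rather than being guessed.
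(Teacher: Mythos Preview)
Your proposal is correct and matches the paper's approach: the Laplace clause is indeed just the substitution $C_i=C$ into the Lagrange solution derived immediately above the corollary, and for the Gaussian clause the paper (Appendix~\ref{app:approx-dp-budgeting}) likewise reduces to the single grouped constraint $\sum_i C_i^2\eta_i^2=\epsilon^2$ and optimizes. The only cosmetic difference is that the paper solves the Gaussian case by Lagrange multipliers rather than your Cauchy--Schwarz argument, a distinction you already flag as immaterial.
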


Lemma~\ref{groupinglemma} implies the following.

\begin{theorem}\label{thm:noisebudgets}
 Let $S$ be a strategy matrix with grouping function $G$, and $R$ be a corresponding recovery matrix
consistent with $G$. Then the solution to the optimization
problem~\eqref{soptpb} -- \eqref{ineq:eta} is the optimal
noise budgeting for $S$ and $R$.
\end{theorem}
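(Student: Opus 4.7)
The plan is to treat the theorem as a direct corollary of Lemma~\ref{groupinglemma} once we verify that the simplified program (\ref{soptpb})--(\ref{ineq:eta}) is exactly the restriction of the general convex program (\ref{optpb})--(\ref{ineq:eps}) to the subspace in which all budgets within a group are equal. Starting from Proposition~\ref{prp:nonuniform-budgeting} together with the variance formula $a\transpose \Var(R\nu) = \sum_i b_i/\eps_i^2$, I would first observe that computing an optimal noise budgeting for $(S,R)$ under $\eps$-differential privacy is, by definition, the program (\ref{optpb})--(\ref{ineq:eps}), so it suffices to show that its optimum coincides with the optimum of (\ref{soptpb})--(\ref{ineq:eta}).

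Next I would carry out the reduction by substituting $\eps_i = \eta_{G(i)}$ into the general program. The objective immediately regroups as $\sum_i b_i/\eps_i^2 = \sum_{r=1}^g s_r/\eta_r^2$ by the definition $s_r = \sum_{i:\, G(i)=r} b_i$, reproducing (\ref{soptpb}). For each column $j$, the left-hand side of the constraint in (\ref{ineq:privacy}) evaluates to $\sum_r \bigl(\sum_{i:\, G(i)=r} |S_{ij}|\bigr)\eta_r$. The row-wise disjointness clause of Definition~\ref{grouping} ensures that each inner sum has at most one nonzero term per group, and the bounded column norm clause forces that nonzero term, when present, to equal $C_r$ in every column. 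Hence all $N$ constraints collapse to the single identity $\sum_r C_r \eta_r \leq \eps$, which is exactly (\ref{eq:privacy}). This shows the feasible sets line up and the objectives agree, so the simplified program's optimum is an upper bound on the general program's optimum.

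Having established the restriction, I would finish by invoking Lemma~\ref{groupinglemma}, which provides the converse: when $R$ is consistent with $G$, the lemma guarantees the existence of an optimal solution of the general program in which $\eps_i$ is constant on each group, so the group-constant optimum matches the general optimum. Combining the two inequalities yields the theorem and shows that the Lagrange-multiplier solution for (\ref{soptpb})--(\ref{ineq:eta}) computed above is in fact an optimal noise budget.

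I expect the real obstacle to live entirely inside Lemma~\ref{groupinglemma}, whose proof in the excerpt is only sketched. The nontrivial content is the symmetrization step: given an optimal $(\eps_i)$ of the general program, one must show that replacing all $\eps_i$ within a group $r$ by a common value $\eta_r$ (e.g., the value minimizing $\sum_{i:\, G(i)=r} b_i/\eps_i^2$ subject to preserving the group's contribution $C_r \max_i \eps_i$ to every column constraint) cannot increase the objective. This uses strict convexity of $t \mapsto 1/t^2$ together with the $R$-consistency hypothesis that all $b_i$ within a group are equal, so that the objective on the group is invariant under permutations of the $\eps_i$ and is minimized, under a fixed $\ell^\infty$ bound, by equalization. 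Once this symmetrization step is in hand, the reduction above yields the theorem.
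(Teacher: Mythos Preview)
Your proposal is correct and matches the paper's approach exactly: the paper states the theorem as an immediate consequence of Lemma~\ref{groupinglemma} (it is introduced with ``Lemma~\ref{groupinglemma} implies the following''), and you do the same, just spelling out more carefully why substituting $\eps_i=\eta_{G(i)}$ collapses~(\ref{optpb})--(\ref{ineq:eps}) to~(\ref{soptpb})--(\ref{ineq:eta}). Your final paragraph correctly locates the real work inside the lemma's omitted symmetrization step; note, though, that the constraint a group sees is not literally an $\ell^\infty$ bound on the $\eps_i$---different columns pick out different rows of the group---so the equalization argument needs a bit more care than your sketch suggests, which is presumably why the paper also leaves it informal.
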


As discussed above, when $Q$ is a set of marginals, all the strategy/recovery matrices proposed in prior work fit
the conditions of Theorem~\ref{thm:noisebudgets}, and thus their accuracy can be improved via optimal noise budgeting.
Observe that the optimization problem~\eqref{soptpb} --
\eqref{ineq:eta} can be solved reasonably fast: given $R$,
$S$ and a grouping of $S$, we can derive the vector of $b_i$ values in time linear in the size of $R$, i.e., in $O(qm)$.
For particular $S$ (e.g. the Fourier matrix), the cost can
be even lower, due to the symmetric structure of $S$ and $R$. Finally, all $\eps_i$ values, as well as $\Var(y)$,
can be computed in $O(m)$ time.

\subsection{Optimal Recovery Matrix (Step 3)}\label{ols}

Given $S \in \mathbb R^{m \times N}$ and the noise parameters $\eps_i$, we wish to compute a matrix
$R \in \mathbb R^{q \times m}$ such that $Q = RS$ and $\Var(y)$ is minimized. Recall that $y=Rz=R(Sx+\nu)$, where
$\nu_i$ is drawn from the Laplace distribution of variance $\frac{2}{\eps_i^2}$, as in
Section~\ref{noisebudget} (the case of Gaussian distribution is similar).
As we show below, the resulting $y$ will also be consistent.

We derive $R$ via least squares statistical estimators. More precisely, given $z=Sx+\nu$,
we first compute an estimate $\hat{x}$ of $x$ which is linear in $z$
and has minimum variance. The vector $\hat{x}$ is called the
optimal (generalized) least squares solution. As we show below, $\hat{x}=Gz$ for some matrix $G$. We
then define $R=QG$ and $y=Rz=Q\hat{x}$. A similar approach was used in~\cite{LHRMG10} for the case
of uniform noise. We extend the computation to the case of non-uniform noise budgets $\eps_i$.

Let $\Sigma$ be the covariance matrix of $z$: $\Sigma = \Cov(z) = \diag(\frac{2}{\eps_i^2}).$
Define $U = \Sigma^{-1/2} S$; hence, $\operatorname{rank}(U)= \operatorname{rank}(S).$
For simplicity, we assume that $\operatorname{rank}(S)=N$. The same ideas as in~\cite[Section 3.3]{LHRMG10}
can be used to handle the case $\operatorname{rank}(S) < N$; see also~\cite{Rao:65} for further details.
Then the LS solution is computed as
\[\hat{x}   = (U\transpose U)^{-1}U\transpose \Sigma^{-1/2}z.\]
Since $\Sigma$ is diagonal, $\Sigma=\Sigma\transpose$. We obtain
\begin{align*}
(U\transpose U)^{-1}U\transpose & =
(S\transpose\Sigma^{-1/2}\Sigma^{-1/2}S)^{-1}S\transpose \Sigma^{-1/2} \\ & =
(S\transpose \Sigma^{-1}S)^{-1}S\transpose \Sigma^{-1/2}.\end{align*}
\[
\text{Thus, }
\hat{x}  =  (S\transpose \Sigma^{-1}S)^{-1}S\transpose \Sigma^{-1} z.\]

Let $G=(S\transpose \Sigma^{-1}S)^{-1}S\transpose  \Sigma^{-1}$. We define $R=QG$, i.e.,
\begin{equation}\label{nonunifR}
R = Q (S\transpose \Sigma^{-1}S)^{-1}S\transpose  \Sigma^{-1}.
\end{equation}

Note that $y = Rz = Q\hat{x}$ is consistent, as per Definition~\ref{def-consist}
(with $x^c=\hat{x}$). By a well-known result from linear statistic estimation~\cite{Rao:65}, the following holds:

\begin{lemma}\label{lem:budgeted-recovery}
Matrix $R$ computed as in~\eqref{nonunifR} minimizes $a\transpose \Var(y)$ (where $y = Rz$). Moreover,
$y$ is consistent and unbiased, i.e., $E[y]=Qx$.
\end{lemma}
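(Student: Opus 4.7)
The plan is to establish the three claims of the lemma in sequence: $RS = Q$ (which immediately yields consistency and unbiasedness), and then the minimum-variance property via a standard Gauss-Markov style orthogonality argument. The key structural fact I will use is that $S^\transpose \Sigma^{-1} S$ is an $N \times N$ positive-definite (hence invertible) matrix: since $\operatorname{rank}(S) = N$ by assumption and $\Sigma^{-1}$ is positive definite (all diagonal entries $\eps_i^2/2 > 0$), the product $S^\transpose \Sigma^{-1} S$ is positive definite as well.

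Given invertibility, direct multiplication yields $RS = Q(S^\transpose \Sigma^{-1} S)^{-1} S^\transpose \Sigma^{-1} S = Q$. From this, $y = Rz = R(Sx + \nu) = Qx + R\nu$, so $E[y] = Qx$ because $E[\nu] = 0$, giving unbiasedness. Consistency is immediate by taking $x^c = \hat{x} = Gz$: then $y = Rz = QGz = Qx^c$, so $y$ lies in the image of $Q$ as required by Definition~\ref{def-consist}.

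For optimality, I would let $M$ be any $q \times m$ matrix such that $y' = Mz$ is an unbiased linear estimator of $Qx$, i.e., $E[Mz] = MSx = Qx$ for all $x$, which forces $MS = Q$. Writing $M = R + K$, the unbiasedness constraint becomes $KS = 0$. A direct expansion gives
\[
\Var(y') = M \Sigma M^\transpose = R \Sigma R^\transpose + R \Sigma K^\transpose + K \Sigma R^\transpose + K \Sigma K^\transpose.
\]
The crucial calculation is that the cross-terms vanish: $R \Sigma K^\transpose = Q(S^\transpose \Sigma^{-1} S)^{-1} S^\transpose \Sigma^{-1} \Sigma K^\transpose = Q(S^\transpose \Sigma^{-1} S)^{-1} (KS)^\transpose = 0$, and similarly for $K \Sigma R^\transpose$. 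Hence $\Var(y') - \Var(y) = K \Sigma K^\transpose$ is positive semidefinite, so every diagonal entry is nonnegative. Since $a \in \mathbb R_+^q$, this gives $a^\transpose \Var(y') \geq a^\transpose \Var(y)$, establishing optimality among unbiased linear estimators.

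The main obstacle, and the step most easily overlooked, is justifying that it suffices to consider \emph{unbiased} linear estimators. Any biased estimator could in principle have smaller variance, but since the objective is a sum of variances (not mean squared errors) and the problem constraint is $Q = RS$ (which enforces unbiasedness via $RSx = Qx$ for all $x$), the class of admissible $R$ is exactly the class of unbiased linear estimators of $Qx$. Thus the Gauss-Markov conclusion applies directly to our optimization problem, and no technical complications arise beyond the rank assumption on $S$ (which, as noted in the excerpt, can be relaxed by using pseudoinverses as in~\cite{LHRMG10,Rao:65}).
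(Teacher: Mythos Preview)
Your proof is correct. The paper does not actually give a self-contained argument for this lemma: it observes consistency in the text (via $x^c=\hat x$) and then simply cites Rao~\cite{Rao:65} for the optimality claim, i.e., it appeals to the generalized least squares / Gauss--Markov theorem. Your argument is precisely the standard proof of that theorem specialized to the present setting, so your approach and the paper's are the same in spirit; you have simply spelled out what the citation contains. Your final paragraph, noting that the constraint $Q=RS$ is exactly the unbiasedness condition and hence the Gauss--Markov conclusion applies to the full admissible class of recovery matrices, is a useful clarification that the paper leaves implicit.
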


\begin{observation}
If $S$ is an orthonormal basis (as with wavelets, Fourier and
identity strategies), we have $S\transpose=S^{-1}$. This implies $G=S^{-1}=S\transpose$, so
$R=QS\transpose$.
\end{observation}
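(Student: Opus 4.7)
The plan is to substitute the orthonormality condition directly into equation~\eqref{nonunifR} and simplify the resulting matrix product, showing that all dependence on the covariance matrix $\Sigma$ cancels out, leaving exactly $QS^\transpose$.

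First, I would record the two consequences of $S$ being an orthonormal basis that matter here: $S$ is square (so that $S^{-1}$ is a genuine two-sided inverse), and $S^\transpose = S^{-1}$, which is equivalent to $S S^\transpose = S^\transpose S = I$. From the second identity we also get $(S^\transpose)^{-1} = S$, i.e. $S^{-\transpose} = S$, which is the fact I will need when inverting a product involving $S^\transpose$.

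Next, I would simplify the inner factor $(S^\transpose \Sigma^{-1} S)^{-1}$ appearing in $G$. Using the standard identity $(ABC)^{-1} = C^{-1} B^{-1} A^{-1}$ for invertible square factors, together with $S^{-1} = S^\transpose$ and $S^{-\transpose} = S$, this becomes
\[
(S^\transpose \Sigma^{-1} S)^{-1} \;=\; S^{-1} \,\Sigma\, S^{-\transpose} \;=\; S^\transpose \,\Sigma\, S.
\]
Plugging back into the definition of $G$ and collapsing the middle pair $S S^\transpose = I$ gives
\[
G \;=\; (S^\transpose \Sigma^{-1} S)^{-1} S^\transpose \Sigma^{-1} \;=\; S^\transpose \Sigma S \, S^\transpose \Sigma^{-1} \;=\; S^\transpose \Sigma \, \Sigma^{-1} \;=\; S^\transpose,
\]
so $G = S^{-1} = S^\transpose$ irrespective of the (diagonal) noise budgets encoded in $\Sigma$. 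Therefore $R = QG = QS^\transpose$, as claimed.

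There is no real obstacle here; the only subtlety worth double-checking is that the cancellation of $\Sigma$ genuinely does not require $\Sigma$ to be a scalar matrix --- it only uses that $\Sigma$ is invertible and that $S$ is orthonormal, so that $SS^\transpose = I$ kills the intervening $\Sigma$ factor. This is worth noting explicitly, because it shows that for wavelet, Fourier, and identity strategies the optimal recovery matrix has the same closed form whether or not the non-uniform noise budgeting of Section~\ref{noisebudget} is in use; only the variance of $y = Rz$ changes with $\Sigma$.
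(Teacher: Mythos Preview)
Your proposal is correct. The paper does not actually supply a proof for this observation --- it simply states the implication as self-evident --- and your substitution into the formula $G=(S^\transpose \Sigma^{-1} S)^{-1}S^\transpose\Sigma^{-1}$ followed by the cancellation $SS^\transpose=I$ is exactly the natural (and only) way to justify it.
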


The cost of finding $R$ as above is relatively high, due to the need to
perform matrix inversion.
While the diagonal matrix $\Sigma$ is trivial to invert, since
$\Sigma^{-1}_{ii} = (\Sigma_{ii})^{-1}$, the matrix
$S\transpose \Sigma^{-1} S$ is generally dense, so is more costly to
invert.

\subsection{Fast Consistency}\label{subsec:fastconsist}

The vector $y=Rz=R(Sx+\nu)$ computed via the optimal recovery matrix $R$ in Step 3 (Section~\ref{ols})
 has two important properties: (i) $y$ is consistent (Definition~\ref{def-consist}); and (ii) $\Var(y_i)$ is minimized for each $1\leq i\leq q$.
Since $\mathbb E[y_i] = Q_{i\cdot}x$ we have $\Var(y_i)=\mathbb E[(y_i-Q_{i\cdot}x)^2]$. Thus, $y$ achieves minimum error for each
query in expectation. As observed in~\cite{BCDKST07, DWHL11} for practical applications it may be necessary to return a vector $y^1$
which is consistent and minimizes a different error measure, e.g., we may wish to minimize $\|y^1-Qx\|_p.$
For example, $p=1$ implies that $y^1$ minimizes the average error and $p = \infty$ minimizes maximum error.

In this section we show how to efficiently compute another recovery matrix $R^1$ such that
$y^1=R^1z$ is consistent and $\|y^1-Qx\|_p$ is small. This approach
is particularly useful when the query matrix $Q\in \mathbb R^{q \times N}$
has $q\ll N$. As we show below, we significantly improve the running times of the approaches used in~\cite{BCDKST07,DWHL11} for this case.
The approach in~\cite{BCDKST07,DWHL11}, translated in our strategy/recovery
framework, is described below.

Start by defining a recovery matrix $R^0$ such that $Q=R^0 S$ and
$y^0=R^0(Sx+\nu)$ has bounded error $\|y^0 - Qx\|_p\leq t$. Usually, $R^0$ is the recovery matrix from Step 2
of our framework. For example, the matrix $R^0$ in~\cite{DWHL11} is implied by the clustering function
over marginals, which heuristically minimizes some
$L^p$-error of the noisy answers. Next, compute a consistent answer $y^1$ that minimizes $\|y^1 - y^0\|_p$.
Recall that $y^1$ is consistent if there exists $x^c$ such that $y^1=Qx^c$. Hence, for $p=1$ or $p=\infty$,
$y^1$ can be computed via a linear program (LP) with variables corresponding to the entries of
$x^c$, and consistency conditions expressed as linear constraints. Other requirements can also be imposed
on $x^c$, e.g., integrality or non-negativity. For $p=2$, $y^1$ is the solution to a least
squares problem (LS).

However, such an LP, resp. LS, uses at least $N$ variables corresponding to the entries in $x^c$.
When $N$ is large (as is usually the case), this leads to large
linear programs. This issue was reported as a bottleneck in the experimental evaluation of~\cite{DWHL11}.
We now propose a different LP, resp. LS, formulation for the consistency problem,
which requires at most $q$ variables (recall that $q$ is the number of queries in the workload $Q$). This leads to
large improvements in running time when $q\ll N$.

First, note that $\operatorname{rank}(Q)=q$ implies that any answer $y\in\mathbb R^q$ is consistent. This is because the linear system $Qx^c = y$ admits the solution $x^c = Q\transpose (QQ\transpose)^{-1}y$ ($\operatorname{rank}(Q)=q$ implies that $QQ\transpose$ is invertible). In particular, $y^1=y^0$ is consistent and minimizes
$\|y^1 - y^0\|_p$ for any $p$.

Assume that $\operatorname{rank}(Q) = q' < q$. We pick $q'$ linearly independent rows of $Q$,
 denoted as $Q' \in \mathbb R^{q' \times N}$, and use them to
 decompose $Q$ as $Q = C Q'$ for some matrix
 $C \in \mathbb R^{q \times q'}$.
Because $Q'$ has linearly independent rows, the above argument implies that, for any $y \in \mathbb
R^{q'}$, the linear system $Q'x^c = y$ has a solution. Hence, any answer $y$ is consistent for the queries $Q'$.
Then $y^1 = Cy$ is consistent for all queries $Q$: $y^1 = Cy = C Q'x^c = Qx^c.$
We find $y$ that minimizes $\|C y - y^0\|_p$ and return $Cy$:
For $p = 1$ and $p = \infty$, $y$ is the solution to an LP; for $p=2$, $y$ is the solution to a least squares problem.
In all cases, the number of variables is $q' < q\ll N$. As observed in~\cite{BCDKST07}, the utility guarantee
follows by the triangle inequality. If $\|Qx - y^0\|_p \le t$, then
$$\|y^1-y^0\|_p = \min_{y \in \mathbb R^{q'}}\|Cy - y^0\|_p \le \|C Q' x - y^0\|_p \le t.$$
Thus, the additional $L_p$-error introduced by consistency is at most the $L_p$-error of
the original noisy answer, i.e., the error at most doubles.

When $Q$ is a set of marginals, we can formulate the LP, resp. LS, without explicitly computing $\operatorname{rank}(Q)$ or
finding a collection of
linearly independent rows $Q'$. Rather, we use the Fourier coefficients of the marginals. The discussion is deferred
to Section~\ref{sec:consistency}.

\section{Consistent Marginals via Fourier Strategies}
\label{consistency}

In this section, we focus on the case when all queries $Q$ correspond
to marginals.
Here, we show that the choice of $S$ as an appropriate Fourier matrix
gives strong guarantees on the variance, as well as
providing consistent query answers.

\eat{
\begin{table}[t]
\centering
\begin{tabular}{|l|l|l|}
  \hline
  Strategy
   & $\eps$-privacy & $(\eps, \delta)$-privacy \\
  \hline
    Marginals & $O\left(\frac{2^{\|\beta\|} \ell}{\eps}\right)$ 
    \cite{BCDKST07}& $O\left(\frac{2^{\|\beta\|} \sqrt{\ell \log(1 / \delta)}}{\eps}\right)$
    \cite{BCDKST07, KRSU10}\\
  \hline
    Fourier &
$O\left(\frac{|\F|\sqrt{2^{\|\beta\|}}}{\eps}\right)$
{\small Appendix~\ref{app:marg}}
& $O\left(\frac{\sqrt{|\F| 2^{\|\beta\|} \log (1 / \delta)}}{\eps}\right)$
    \cite{BCDKST07, KRSU10} \\
  \hline
\end{tabular}
\caption{
Expected noise per marginal: $\mathbb{E}\left[\|C^\beta x - \tilde C^{\beta}\|_1\right]$.
Here $\ell$ is the total number of released marginals and $|\F|$ is the total number of Fourier
coefficients required to compute these marginals.
}\label{tab:noise-summary}
\end{table}
}

\subsection{Marginals and Fourier analysis}
\label{sec:fourierdef}


In this section we assume that all $d$ attributes in the database table are binary; for simplicity, let the domain of each attribute be $\{0,1\}$. We emphasize that this assumption is
without loss of generality: an attribute which has $|A|$
distinct values can be mapped to $\lceil \log |A|
\rceil$ binary attributes (and we do so in our experimental
study). However, we present our results with binary attributes to
avoid overcomplicating the notation.
Consequently, there are $N = 2^d$ entries in the database vector $x$, where
each entry is indexed by some $\alpha \in \{0,1\}^d$, and $x_\alpha$ is the number of entries
in the database with attributes $\alpha$; recall the example in Figure~\ref{toyex}(a).

There are also $2^d$ possible marginals (a.k.a. subcubes of the data cube) of interest, corresponding
to aggregations along a subset of dimensions. For any $\alpha \in \{0,1\}^d$, let $C^\alpha$ denote the
marginal over non-zero attributes in $\alpha$, and let $\|\alpha\|$ denote the number of non-zero
entries in $\alpha$, i.e., the dimensionality of the marginal. Note that here $\alpha$ is the
bit-vector indicator for the attributes in the marginal. We will consistently use it as a superscript in such cases, and as a subscript when it indexes an entry in a vector.


We use the following notations, as in~\cite{BCDKST07}: For any pair of $\alpha, \beta \in
\{0,1\}^d$ we write $\alpha \wedge \beta$ for the bit-wise
intersection of the pair, i.e. $(\alpha \wedge \beta)_i = \alpha_i
\wedge \beta_i$.
The inner-product in this
space, $\langle \alpha, \beta \rangle$, can also be expressed via the
intersection operator:
$\langle \alpha, \beta \rangle = \|\alpha \wedge \beta\| $.
We say that $\alpha$ is {\em dominated by} $\beta$, denoted
 $\alpha \preceq \beta$, if $\alpha \wedge \beta = \alpha$.

The computation of a marginal $C^\alpha$ over the input can be thought
of as a linear operator $C^\alpha \colon \mathbb R^{2^d} \rightarrow \mathbb R^{2^{\|\alpha\|}}$
mapping the full-dimensional contingency table to the marginal over non-zero
attributes in $\alpha$, by adding relevant entries over the attributes not in $\alpha$. More precisely,
for each $\beta \preceq \alpha$, the cell $\beta$ in the marginal $C^\alpha$, denoted $(C^\alpha x)_\beta$,
sums the entries in the contingency table $x$ whose attributes in $\alpha$ are set to values
specified by $\beta$:
$(C^\alpha x)_\beta = \sum_{\gamma \colon \gamma \wedge \alpha = \beta} x_\gamma.$

\noindent
{\bf Example.} Let $x$ be the vector in Figure~\ref{toyex}(a). Assume we want to compute the marginal
$C^\alpha = C^{110}$, i.e., the marginal over attributes $A$ and $B$.
Then the value in the cell $(A=0,B=0)$
is denoted by $(C^{110}x)_{000}$ (i.e., $\beta=000$).
The value in the cell $(A=0,B=1)$ is denoted
by $(C^{110}x)_{010}$.
Note that $000 \preceq 110$ and $010 \preceq 110$. On the other hand,
$001\not\preceq 110$, so there is no cell $(C^{110}x)_{001}$ in the
marginal over $A,B$.
So, while the cell index $\beta$ is $d$-dimensional,
only the $\|\alpha\|$ bits corresponding to non-zeros in $\alpha$
vary---the rest are held at 0.
Hence, there
are only $2^{\|\alpha\|}$ cell indexes in the marginal $C^\alpha x$.
In this example, there are only 4 cells in $C^{110}x$.
By the above formula, $(C^{110}x)_{000}$ $=$ $x_{000} + x_{001}$ $=$ $3$ and
$(C^{110}x)_{010}$ $=$ $x_{010} + x_{011}$ $=$ $1.$
\qed

The set of all marginals $C^\alpha$ with $\|\alpha\| = k$
is referred to as the set of all $k$-way marginals.
They are commonly used to visualize the low-rank dependencies
between attributes, to build efficient classifiers from the data,
and so on.

We use the Hadamard transform, which is the
$2^d$-dimensional discrete Fourier transform
over the Boolean hypercube $\{0,1\}^d$.
This allows us to represent
any marginal as a summation over relevant Fourier coefficients.
The advantage is that the number of coefficients needed for each
marginal is just the number of entries in the marginal.
The Fourier basis vectors $f^\alpha$ for $\alpha \in \{0,1\}^d$ have components
$f^\alpha_\beta = 2^{-d/2} (-1)^{\langle \alpha, \beta \rangle}.$
The vectors $f^\alpha$ form an orthonormal basis in $\mathbb R^{2^d}$.
We will use the following properties of Fourier basis vectors and
marginal operators in the Fourier basis (proofs can be found
in~\cite{BCDKST07}):

\begin{theorem}\label{thm:fourier-basics}
For all $\alpha, \beta \in \{0,1\}^d$ we have:
  \begin{enumerate}
    \item $(C^\alpha f^\beta)_\gamma = \sum\limits_{\eta \colon \eta \wedge \alpha = \gamma} f^\beta_\eta = \sum\limits_{\eta \colon \eta \wedge \alpha = \gamma} (-1)^{\langle \beta, \eta \rangle}/2^{d/2}$.
    \item $C^\alpha x = \sum\limits_{\beta \preceq \alpha} \langle f^\beta, x \rangle C^\alpha f^\beta$
  \end{enumerate}
\end{theorem}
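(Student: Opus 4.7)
The plan is to treat the two parts separately, since part 1 is essentially an unfolding of definitions while part 2 requires an orthogonality/vanishing argument.

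For part 1, I would simply apply the definition of the marginal operator. By the formula $(C^\alpha x)_\beta = \sum_{\gamma : \gamma \wedge \alpha = \beta} x_\gamma$ given earlier, applied with the vector $f^\beta$ in place of $x$ and index $\gamma$ in place of $\beta$, we immediately get $(C^\alpha f^\beta)_\gamma = \sum_{\eta : \eta \wedge \alpha = \gamma} f^\beta_\eta$. Substituting the definition $f^\beta_\eta = 2^{-d/2}(-1)^{\langle \beta, \eta\rangle}$ yields the second equality. This part is routine and deserves only a sentence.

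For part 2, I would start from the fact that $\{f^\beta\}_{\beta \in \{0,1\}^d}$ is an orthonormal basis of $\mathbb{R}^{2^d}$, so $x = \sum_\beta \langle f^\beta, x\rangle f^\beta$, and apply the linear operator $C^\alpha$ to get $C^\alpha x = \sum_\beta \langle f^\beta, x\rangle C^\alpha f^\beta$. The whole content of part 2 is then the claim that $C^\alpha f^\beta = 0$ whenever $\beta \not\preceq \alpha$, which lets us restrict the summation to $\beta \preceq \alpha$. This vanishing claim is the main (and only) nontrivial step.

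To prove the vanishing, I would plug the formula from part 1 into a single cell $(C^\alpha f^\beta)_\gamma$ and parameterize $\eta$ by writing $\eta = \gamma + \zeta$ where $\zeta$ is supported on the complement of $\alpha$'s support (so that $\eta \wedge \alpha = \gamma$). Using $\langle \beta, \eta\rangle \equiv \langle \beta, \gamma\rangle + \langle \beta, \zeta\rangle \pmod 2$, the sum factors as
\[
(C^\alpha f^\beta)_\gamma = \frac{(-1)^{\langle \beta, \gamma\rangle}}{2^{d/2}} \sum_{\zeta \,:\, \zeta \wedge \alpha = 0} (-1)^{\langle \beta, \zeta\rangle} = \frac{(-1)^{\langle \beta, \gamma\rangle}}{2^{d/2}} \prod_{i : \alpha_i = 0}\bigl(1 + (-1)^{\beta_i}\bigr).
\]
Each factor of the product is $2$ if $\beta_i = 0$ and $0$ if $\beta_i = 1$, so the product is nonzero precisely when $\beta_i = 0$ for every $i$ with $\alpha_i = 0$, that is, when $\beta \preceq \alpha$. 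This gives the desired vanishing and completes part 2.

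The only genuine obstacle is bookkeeping the indexing correctly when splitting $\eta$ into its $\alpha$-supported and $\alpha$-complementary parts, and verifying the parity identity for $\langle\beta,\eta\rangle$. Once that factorization is done, everything else is a direct consequence of orthonormality and linearity, so I would expect the whole proof to fit in a few lines by citing part 1 and the standard character-sum identity.
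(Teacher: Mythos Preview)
Your proposal is correct. The paper does not actually prove this theorem: immediately before the statement it writes ``proofs can be found in~\cite{BCDKST07}'' and gives no argument of its own, so there is nothing in the paper to compare your approach against. Your treatment---unfolding the definition for part~1, then expanding $x$ in the orthonormal Fourier basis and showing $C^\alpha f^\beta = 0$ for $\beta \not\preceq \alpha$ via the factorization $\prod_{i:\alpha_i=0}(1+(-1)^{\beta_i})$---is the standard character-sum argument and is exactly what one would expect the cited reference to do as well.
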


\begin{table*}[t]
\centering
\begin{tabular}{|l|l|l|}
\hline
 Strategy  & $\eps$-privacy & $(\eps, \delta)$-privacy \\

  \hline
    Base counts & $O(\frac1\eps{2^{(d + k)/2}})$
     \cite{DMNS06} & $O(\frac1\eps {2^{(d + k)/2}
       \sqrt{\log(1/\delta)}})$
     \cite{DKMMN06}\\
  \hline

    Marginals & $O(\frac1\eps {2^{k} \binom{d}{k}})$
     \cite{BCDKST07} & $O(\frac1\eps{2^{k} \sqrt{\binom{d}{k} \log(1 / \delta)}})$
     \cite{BCDKST07}\\
  \hline

    Fourier coefficients (uniform noise)  & $O(\frac1\eps{k
      \binom{d}{k} \sqrt{2^{k}}})$
\ifnum\final=0
    Theorem~\ref{thm:improved-fourier-noise}
\fi
& $O(\frac1\eps{\sqrt{k 2^k \binom{d}{k} \log (1 / \delta)}})$
    \cite{BCDKST07}\\
  \hline
    Fourier coefficients (non-uniform noise)
& $O(\frac1\eps{k \sqrt{\binom{d}{k}\binom{d + k}{k}}})$ Lemma~\ref{lem:budgeting-k-way-marginals} & $O(\frac1\eps{\sqrt{k \binom{d + k}{k} \log (1 / \delta)}})$
Lemma~\ref{lem:budgeting-k-way-marginals}\\

  \hline\hline

  Lower bound & $\tilde \Omega(\frac1\eps{\sqrt{\binom{d}{k}}})$ \cite{KRSU10} & $\tilde \Omega(\frac1\eps{\sqrt{\binom{d}{k}} (1 - \delta/\eps)})$ \cite{KRSU10}\\
  \hline
\end{tabular}
\caption{
Releasing all $k$-way marginals for $k < d / 2$. Expected noise per marginal: $\mathbb{E}\left[\|C^\beta x - \tilde C^{\beta}\|_1\right]$.
The total number of released marginals is $\binom{d}{k}$ and the total number of Fourier coefficients required to compute these marginals is $\sum_{i = 0}^{k} \binom{d}{k} \le k \cdot \binom{d}{k}$.
}\label{tab:noise-k-way-summary}\end{table*}

\subsection{Bounds for marginals}
\label{marginals}

The use of a Fourier strategy matrix was studied in~\cite{BCDKST07},
under a uniform error budget.
Here, we show that using a non-uniform budgeting can provide
asymptotically improved results.
We study the case when the query set $Q$ corresponds to a
collection of $\ell$ marginals $C^{\alpha_1}, \dots, C^{\alpha_\ell}$.
For a given marginal $C^{\alpha_i}$ the accuracy bounds will be
parametrized by its dimensionality $\|\alpha_i\|$, the total number of marginals $\ell$ and the total number
of Fourier coefficients corresponding to the collection of marginals,
denoted as $|\F|$. Theorem~\ref{thm:fourier-basics}(2) implies that $|\F|$ $=$ $|\cup_i\{ \beta :
\beta
\preceq \alpha_i \}|$.
If the random variable corresponding to the differentially private value of a
marginal $C^\alpha x$ is denoted as $\tilde C^\alpha x$, then we state a
bound on the expected absolute error,
$\mathbb E\left[ \|C^{\alpha} x - \tilde C^{\alpha}x\|_1\right]$
to simplify presentation and comparison with prior work.
All our bounds can also be stated in terms of the variance
$\Var(\tilde C^{\alpha}x)$, or as high-probability bounds.

The asymptotic bounds on error are easier to interpret in the
important special case of the set of all $k$-way marginals.
In this case because of the symmetry of the query workload, the
expected error in all marginals is the same.
Table~\ref{tab:noise-k-way-summary} summarizes bounds on error in this
case together with the unconditional lower bounds for all
differentially private algorithms from~\cite{KRSU10}.
While in the case of $(\eps,\delta)$-differential privacy our upper
bounds are almost tight with the lower bounds from~\cite{KRSU10}, for
$\eps$-differential privacy the gap is still quite significant and
remains a challenging open problem.

Our next
\ifnum\final=0
 lemma (proof in Appendix~\ref{app:k-way})
\else
result
\fi
gives bounds on expected error of the Fourier
strategy  with {\em non-uniform} noise.
\ifnum\final=1
We omit the proof for lack of space.
\fi

\begin{lemma}~\label{lem:budgeting-k-way-marginals}
For a query workload consisting of all $k$-way marginals over data $x \in \mathbb R^{2^d}$ the bounds on the
expected error of the Fourier strategy mechanism with non-uniform noise are given as follows:\\
1. For $\eps$-differential privacy the expected noise per marginal is
$O\big(\frac1\eps \cdot k \sqrt{\binom{d}{k} \binom{d+k}{k}}\big)$.
\\
2. For $(\eps, \delta)$-differential privacy the expected noise per
marginal is $O\big(\frac1\eps \cdot {\sqrt{k \binom{d + k}{k}
    \log(1/\delta)}}\big)$.
\end{lemma}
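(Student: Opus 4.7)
The plan is to instantiate Proposition~\ref{prp:nonuniform-budgeting} with the Fourier strategy $S$ whose rows are the Hadamard basis vectors $\{f^\beta : \|\beta\| \leq k\}$ indexed by the set $\F$ of Fourier coefficients needed to reconstruct every $k$-way marginal. Theorem~\ref{thm:fourier-basics}(2) fixes the corresponding recovery matrix $R$: its nonzero entries are $R_{(\alpha,\gamma),\beta} = (C^\alpha f^\beta)_\gamma = 2^{d/2-k}(-1)^{\langle \beta,\gamma\rangle}$ for $\beta \preceq \alpha$ with $\|\alpha\|=k$. Counting the pairs $(\alpha,\gamma)$ for which this entry is nonzero, the variance weight appearing in objective~\eqref{optpb} evaluates to
\[
  b_\beta \;=\; 2\|R_{\cdot,\beta}\|_2^2 \;=\; 2^{d-k+1}\binom{d-\|\beta\|}{k-\|\beta\|},
\]
which depends only on $\|\beta\|$.

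Since the $k$-way workload is invariant under permutations of the $d$ coordinates, convexity of the objective produces an optimum where $\eps_\beta$ depends only on $\|\beta\|$, so I take the ansatz $\eps_\beta = \eta_{\|\beta\|}$. The identity $\binom{d}{i}\binom{d-i}{k-i} = \binom{d}{k}\binom{k}{i}$ collapses the total variance to $2^{d-k+1}\binom{d}{k}\sum_{i=0}^k \binom{k}{i}/\eta_i^2$, subject to the single privacy constraint $\sum_{i=0}^k \binom{d}{i}\eta_i = O(2^{d/2}\eps)$ coming from Proposition~\ref{prp:nonuniform-budgeting}(i) and $|S_{\beta j}|=2^{-d/2}$. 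A single Lagrange multiplier yields $\eta_i \propto (\binom{k}{i}/\binom{d}{i})^{1/3}$ and the optimum
\[
  \Var_{\mathrm{tot}} \;=\; O\!\left(\frac{\binom{d}{k}\, A^3}{2^k \eps^2}\right), \qquad A \;:=\; \sum_{i=0}^k \binom{d}{i}^{2/3}\binom{k}{i}^{1/3}.
\]

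The crux is to bound $A$ in terms of $\binom{d+k}{k}$. I would apply three-term H\"older (exponents $3,3,3$) to the factorization $\binom{d}{i}^{2/3}\binom{k}{i}^{1/3} = 1 \cdot \binom{d}{i}^{1/3} \cdot (\binom{d}{i}\binom{k}{i})^{1/3}$, giving
\[
  A \;\leq\; (k+1)^{1/3}\Bigl(\sum_i \binom{d}{i}\Bigr)^{1/3}\Bigl(\sum_i \binom{d}{i}\binom{k}{i}\Bigr)^{1/3}.
\]
Vandermonde's identity gives $\sum_i \binom{d}{i}\binom{k}{i} = \binom{d+k}{k}$, and the standard bound $\sum_{i=0}^k\binom{d}{i} = O(k\binom{d}{k})$ (valid for $k \leq d/2$) yields $A = O(k^{2/3}\binom{d}{k}^{1/3}\binom{d+k}{k}^{1/3})$. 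Substituting back, $\Var_{\mathrm{tot}} = O(k^2 \binom{d}{k}^2 \binom{d+k}{k}/(2^k\eps^2))$. By symmetry the per-cell variance is uniform across the $\binom{d}{k}2^k$ cells, so applying the Laplace identity $\mathbb{E}[|Z|] = \sqrt{\Var(Z)/2}$ to each of the $2^k$ cells of one marginal delivers the claimed per-marginal bound $O\bigl(k\sqrt{\binom{d}{k}\binom{d+k}{k}}/\eps\bigr)$.

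The $(\eps,\delta)$-DP claim follows from the same template with the Gaussian mechanism of Proposition~\ref{prp:nonuniform-budgeting}(ii): the privacy constraint becomes quadratic, $\sum_i \binom{d}{i}\eta_i^2 = O(2^d\eps^2)$, the Lagrangian returns $\eta_i \propto (\binom{k}{i}/\binom{d}{i})^{1/4}$, and Cauchy--Schwarz replaces the three-term H\"older step via $\sum_i \sqrt{\binom{d}{i}\binom{k}{i}} \leq \sqrt{(k+1)\binom{d+k}{k}}$; converting variance to expected absolute noise then uses $\mathbb{E}[|Z|] = \sqrt{2\Var(Z)/\pi}$ for Gaussian $Z$. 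The main obstacle is picking the right H\"older factorization so that Vandermonde's identity produces the $\binom{d+k}{k}$ factor exactly; once that factorization is identified, the rest is direct computation.
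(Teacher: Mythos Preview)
Your proposal is correct and follows essentially the same route as the paper: instantiate the non-uniform budgeting optimization for the Fourier strategy, compute $b_\beta = 2^{d-k+1}\binom{d-\|\beta\|}{k-\|\beta\|}$, solve the Lagrangian, and bound the resulting sum using the identity $\binom{d}{i}\binom{d-i}{k-i}=\binom{d}{k}\binom{k}{i}$ together with Vandermonde's identity $\sum_i\binom{d}{i}\binom{k}{i}=\binom{d+k}{k}$. The only cosmetic differences are that the paper uses the trivial one-row-per-group grouping and invokes Corollary~\ref{cor:noisebudget-variance} (so its Lagrangian has $m$ variables rather than your $k+1$, but the optima coincide by the very symmetry you cite), and in place of your three-term H\"older the paper applies the power-mean bound $(\sum_{i=0}^k x_i)^3\le (k+1)^2\sum_i x_i^3$ followed by $\binom{d}{i}\le\binom{d}{k}$; both chains land on the identical estimate $A^3=O(k^2\binom{d}{k}\binom{d+k}{k})$.

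One small correction: the per-cell noise is a \emph{linear combination} of independent Laplace variables, not itself Laplace, so your ``Laplace identity'' $\mathbb E[|Z|]=\sqrt{\Var(Z)/2}$ does not hold as an equality. What you need (and what the paper uses) is Jensen's inequality $\mathbb E[|Z|]\le\sqrt{\Var(Z)}$, which suffices for the $O(\cdot)$ bound. In the Gaussian case your identity is fine, since a sum of independent Gaussians is Gaussian.
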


%
%
These bounds are summarized in~\tabref{noise-k-way-summary}, along with those that follow from other approaches.%
\ifnum\final=1
\footnote{We derive a tighter bound for the Fourier stategy under
  uniform noise than in~\cite{BCDKST07}, but details are omitted for
  space reasons.}
Using the Fourier transform as the strategy matrix in our framework
improves substantially over  prior bounds:
  strategies based on $S=I$ or $S=Q$ incur factors exponential in $d$
  (so linear in $N$, the size of the contingency table).
In contrast, our bounds depend only on $k$, and factors in $d$ that
are polynomial for constant $k$ (the region of most interest).
That is, the noise depends on $\log N$.
\else
We note in passing that we can provide a tighter analysis of the noise for
the Fourier strategy under uniform noise than in~\cite{BCDKST07},
by a factor of $O(\sqrt{2^k})$---details are in
Appendix~\ref{app:marg}.
\fi

\medskip
\noindent
{\bf Time cost comparison.}
To directly compute a single $k$-way marginal over $d$-dimensional
data takes time $O(2^d)$, and so computing all $k$-way marginals takes
$O(d^k 2^d)$ naively.
Computing the Fourier transform of the data takes time $O(d 2^d)$,
and deriving the $k$-way marginals from this takes time $O(4^k)$
per marginal, i.e., $O(d 2^d + 4^{k} d^k)$ for all marginals~\cite{BCDKST07}.
We compare the cost of different strategies to these costs.
Materializing noisy counts ($S=I$) and aggregating them
to obtain the $k$-way marginals also takes time $O(d^k 2^d)$, as does
materializing the marginals and then adding noise ($S=Q$).

The clustering method proposed in~\cite{DWHL11} is more expensive, due
to a search over the space of possible marginals to output.
The cost is $O(d^k k \min(2^d d^k,3^d))$: clearly as the dimensionality
grows, this rapidly becomes infeasible.
However, across all strategies, the step of choosing the non-uniform
error budget is dominated by the other costs, and so does not alter
that asymptotic cost.

\subsection{Consistency via Fourier coefficients}\label{sec:consistency}

In Section~\ref{subsec:fastconsist} we discussed a general approach for computing consistent answers
for a query workload $Q\in\mathbb R^{q \times N}$ with $\operatorname{rank}(Q) < q$. The approach required explicitly
finding a set of $\operatorname{rank}(Q)$ linearly independent rows in $Q$ and decomposing $Q$.
We now show that when $Q$ is a set of marginals we can compute consistent answers without such expensive
steps. Instead, we ensure consistency by writing an LP that uses the Fourier coefficients corresponding to
the marginals in $Q$. Let $Q$ consist of $\ell$ marginals
  $C^{\alpha_1}, \dots, C^{\alpha_\ell}$.
We introduce variables for the Fourier coefficients corresponding to
these marginals, denoted as $\mathcal{F} = \{\hat f^{\beta} | \exists
i \colon \beta \preceq \alpha_i \}$. To simplify notation, we rename them as
$\mathcal F = \{\hat f_{1},\dots, \hat f_{m}\}$, where $|\mathcal F| = m$.
Marginals $C^{\alpha_1}, \dots, C^{\alpha_\ell}$ can be computed from
$\mathcal F$, using formulas from Theorem~\ref{thm:fourier-basics}:
$$(C^{\alpha_i})_\gamma = \textstyle{\sum_{\alpha \preceq \alpha_i}} \hat f^{\alpha} \left( C^{\alpha_i} f^\alpha \right)_\gamma,$$
for all $i \le \ell$ and $\gamma \preceq \alpha_i$.
We will index entries in the marginals by pairs $(i,\gamma)$,
where $\gamma \preceq \alpha_i$.
Let the total number of entries in marginals $C^{\alpha_1}, \dots ,
C^{\alpha_k}$ be equal to $\sum_{i = 1}^\ell 2^{\|\alpha_i\|} = K$.
Let $R$ be the recovery matrix for the Fourier strategy: $R \in \mathbb R^{K \times m}$ with entries
$R_{(i,\gamma),\alpha} = (C^{\alpha_i} f^{\alpha})_{\gamma}$. Then
$(C^{\alpha_1}, \dots, C^{\alpha_\ell}) = R \cdot (\hat f_1, \dots, \hat f_m)$.
Suppose that we are given a set of inconsistent noisy values of these
marginals $(\tilde C^{\alpha_1}, \dots, \tilde C^{\alpha_\ell})$.
We formulate the following optimization problem to find
the consistent set of marginals $(\bar C^{\alpha_1}, \dots, \bar
C^{\alpha_\ell})$ that is closest to the noisy values in $L^p$-norm:
\begin{align*}
\text{Minimize }
\|(\bar C^{\alpha_1}, \dots, \bar C^{\alpha_k}) -  (\tilde C^{\alpha_1}, \dots, \tilde C^{\alpha_\ell})\|_p \\
\text{ Subject to }(\bar C^{\alpha_1}, \dots, \bar C^{\alpha_\ell}) = R \cdot (\hat f_1, \dots, \hat f_m)
\end{align*}
For $p = 2$ this is gives a least squares problem, with $m$ variables
and $K$ constraints, which is expressed as:
$$\text{Minimize } \|R \cdot (\hat f_1, \dots, \hat f_m) - (\tilde C^{\alpha_1}, \dots, \tilde C^{\alpha_\ell})\|_2.$$
For $p = 1$ and $p = \infty$, this gives an LP similar to~\cite{BCDKST07}.

The running time of this consistency step via least squares only depends on the number of queries.
For example, for the case of all $k$-way marginals,
  we need to work with matrices of size $O(d^k)$, and perform a
  constant number of multiplications and inversions.
In contrast, prior work required solving LPs of size proportional to
the size of the data, $N=2^d$, which takes time polynomial in $N$.

\eat{
\grinote{Should we elaborate?}.

\subsection{Direct consistency for marginals}

As before, we consider a contingency table,
represented as a vector $x \in \mathbb{R}^{2^k}$.
Recall from Section~\ref{sec:fourierdef}
that a marginal $C^{\alpha}$ is an operator
$\mathbb{R}^{2^k} \rightarrow \mathbb{R}^{2^{\|\alpha\|}}$,
where $\alpha \in \{0,1\}^k$, which is defined as:
$$(C^{\alpha})_{\beta} = \sum\limits_{\gamma \colon \gamma \wedge \alpha = \beta} x_{\gamma}.$$

Suppose we want to release the set of $k$ marginals: $C^{\alpha_1}, \dots, C^{\alpha_k}$ for a vector $x$.
Our goal is to ensure differential privacy and at the same time make
this set of marginals (sum) consistent.
Thus, if we release the set of noisy marginals
$\tilde{C}^{\alpha_1}, \dots, \tilde{C}^{\alpha_k}$,
there should exist some vector $x' \in \mathbb{R}^{2^k}$,
such that
$C^{\alpha_1}(x') = \tilde{C}^{\alpha_1}, \dots, C^{\alpha_k}(x')=\tilde{C}^{\alpha_k}$.

We will show sufficient conditions for consistency.
A first step is to take a given marginal, and further aggregate it to
obtain a second (smaller) marginal.
\grinote{Explain the intuition and give an example: basically we show that a natural sum consistency condition is indeed sufficient.}

\begin{definition}[Aggregate marginal]
\label{def:aggregatemarginal}
Given a marginal $C^{\alpha}$ we define an \textit{aggregate marginal}
$C^{\gamma}_{\alpha} \colon \mathbb{R}^{2^k} \rightarrow \mathbb{R}^{2^{\|\gamma\|}}$
for all $\gamma \preceq \alpha$ as follows.
For all $\beta \preceq \gamma$:
$$(C^{\gamma}_{\alpha})_{\beta}
  = \sum\limits_{\delta \preceq \alpha \colon \delta \wedge \gamma = \beta} (C^{\alpha})_{\delta}.$$
Note that $\beta \preceq \gamma$, and we define aggregate marginal
only for $\gamma \preceq \alpha$.
\end{definition}

A necessary condition for consistency is that, given any pair of the
$k$ marginals, if we can aggregate them to obtain a marginal on the same
set of attributes, then these two views of the same marginal should be
identical.
Our next theorem argues that this is in fact a sufficient condition.

\begin{theorem}\label{thm:consistency}
  The set of marginals $\tilde C^{\alpha_1}, \dots, \tilde C^{\alpha_t}$
  is consistent, iff for all $i$ and $j$, such that $1 \le i < j \le t$ we have:
$$\tilde C^{\alpha_i \wedge \alpha_j}_{\alpha_i} = \tilde C^{\alpha_i \wedge \alpha_j}_{\alpha_j}.$$
\end{theorem}

\begin{proof}
 Clearly, if the set of marginals is consistent, then the condition
 above should hold.
 For the other direction we will make use of the Fourier analytic
  approach in a new way.
  Recall the definition and properties of the Fourier representation
given in Theorem~\ref{thm:fourier-basics}.

We denote the Fourier coefficient $\langle f^\alpha, x \rangle$ as
$\hat f^\alpha (x)$.
In our situation, we do not know $x$ explicitly; rather, we are given
the marginal values  $(\tilde C^{\beta})_\gamma$ for all $\gamma \preceq \beta$.
Then we can consider the set of equations in the second part of
Theorem~\ref{thm:fourier-basics} as a linear system with variables
$\hat f^\alpha$.
We first show that this linear system always has a unique solution:

\begin{lemma}\label{lmm:fourier-linear-system}
The linear system, consisting of the following equations for all $\gamma \preceq \beta$:
  $$\sum_{\alpha \preceq \beta} \hat f^\alpha (C^{\beta}f^{\alpha})_\gamma = (\tilde C^{\beta})_\gamma,$$
  has a unique solution for any values of $(\tilde C^{\beta})_\gamma$.
  \end{lemma}
  \begin{proof}
We prove the lemma by showing that the matrix of the linear system has
full rank.
The matrix of a linear system has full rank if and only if the
corresponding Gram matrix has full rank:
 the Gram matrix of a matrix $A$ is a matrix $G(A)$, consisting of dot
 products of rows of a matrix $A$.
If we denote the $i$-th row of $A$ as $A_i$, then
  $G(A)_{ij} = \langle A_i, A_j\rangle$.

Computing the entries of a Gram matrix for our linear system
$A$ we get:
\begin{align*}
\lefteqn{    \hat G(A)_{\gamma \delta}
    = \sum\limits_{\alpha \preceq \beta} (C^{\beta}
    f^{\alpha})_{\gamma} (C^{\beta} f^{\alpha})_{\delta} } &
\\
    & = \sum\limits_{\alpha \preceq \beta} \bigg(\sum\limits_{\eta_1 \colon \eta_1 \wedge \beta = \gamma} f^{\alpha}_{\eta_1} \bigg)
    \bigg(\sum\limits_{\eta_2 \colon \eta_2 \wedge \beta = \delta} f^{\alpha}_{\eta_2} \bigg) \\
    & =  \frac1{2^{k}} \sum\limits_{\alpha \preceq \beta} \bigg(\sum\limits_{\eta_1 \colon \eta_1 \wedge \beta = \gamma} (-1)^{\langle \alpha, \eta_1 \rangle}\bigg) \bigg(\sum\limits_{\eta_2 \colon \eta_2 \wedge \beta = \delta} (-1)^{\langle \alpha, \eta_2 \rangle}\bigg) \\
    & =  \frac{1}{2^{k}} \sum\limits_{\alpha \preceq \beta} \bigg(2^{k - \|\beta\|} (-1)^{\langle \alpha, \gamma \rangle }\bigg)
    \bigg(2^{k - \|\beta\|} (-1)^{\langle \alpha, \delta \rangle }\bigg) \\
    & =  2^{k - 2\|\beta\|} \sum\limits_{\alpha \preceq \beta} (-1)^{\langle \alpha, \gamma \rangle }
    (-1)^{\langle \alpha, \delta \rangle } \\
    & =  2^{k - 2\|\beta\|} \sum\limits_{\alpha \preceq \beta} (-1)^{\langle \alpha, \gamma \oplus \delta \rangle }.
    \end{align*}

The summation $\sum\limits_{\alpha \preceq \beta} (-1)^{\langle
  \alpha, \gamma \oplus \delta \rangle }$ is equal to zero whenever
  $\gamma \neq \delta$.
This follows, since for $\gamma \neq \delta$,
there must an exist an index $i$ such that $\gamma_i \neq \delta_i$.
Further, we must have $\beta_i = 1$, since we have
$\gamma \preceq \beta$ and $\delta \preceq \beta$, and otherwise we
would have $\beta_i = \gamma_i = \delta_i = 0$.
Therefore, the summation over $\alpha \preceq \beta$ can be
partitioned into pairs $(\alpha^{(0)},\alpha^{(1)})$ such that
$\alpha^{(0)}_i = 0$ and $\alpha^{(1)}_i = 1$, but agree on all other
indices.
Therefore $(-1)^{\langle \alpha^{(0)},\gamma \oplus \delta\rangle} =
 - (-1)^{\langle \alpha^{(1)},\gamma \oplus\delta\rangle}$, and so
 they cancel in the summation.

Meanwhile, for $\gamma=\delta$, every exponent of $(-1)$ is zero, and so
we obtain a non-negative contribution to the sum for each of the
$2^{\|\beta\|}$ values of $\alpha \preceq \beta$.
Consequently,
  the Gram matrix of this system is diagonal, with entries
  $\hat A_{\gamma \gamma} = 2^{k - \|\beta\|} >0$ and thus has full rank.
\qed
\end{proof}

We can derive such linear system for every marginal
  $\tilde C^{\alpha_i}$ for $1 \le i \le t$.
  If we solve these systems separately, we will get solutions $\mathcal{\hat F}_1, \dots \mathcal{\hat F}_t$,
  where $\mathcal{\hat F}_i$ is a collection of Fourier coefficients ${\hat f}^\alpha$ for all $\alpha \preceq \alpha_i$,
  obtained as a solution of the $i$-th linear system.
If the sets of attributes $\alpha_i$  had no pairwise
  intersections, we would have a consistent set of Fourier
  coefficients, and hence overall (sum) consistency.
If this is not the case,
  we work with the sets of values of $\hat f^{\alpha}$, obtained from
  solving each of the different linear systems.
The value obtained from the $i$-th linear system we denote as $\hat f^{\alpha}_i$.

We next show that an aggregate marginal has a natural expression as a
marginal directly:

\begin{proposition}\label{prp:fourier-aux}
For all $\alpha, \beta, \gamma, \rho$ we have
\[\sum\limits_{\eta \preceq \beta \colon \eta \wedge \beta \wedge \rho = \gamma} (C^{\beta} f^{\alpha})_\eta
  = (C^{\beta \wedge \rho} f^{\alpha})_\gamma.\]
\end{proposition}

\begin{proof}
Using the definition of a marginal twice and changing indices in
summation we have:
\begin{align*}
\sum\limits_{\eta \preceq \beta \colon \eta \wedge \beta \wedge
  \rho = \gamma} (C^{\beta} f^{\alpha})_\eta
& = \sum\limits_{\eta \preceq \beta \colon \eta \wedge \beta
  \wedge \rho = \gamma} \sum\limits_{\delta \colon \delta \wedge
  \beta = \eta} f^{\alpha}_{\delta} \\
& = \sum\limits_{\delta \colon \delta \wedge \beta \wedge \rho  = \gamma} f^{\alpha}_{\delta}
\\& = (C^{\beta \wedge \rho} f^{\alpha})_\gamma.
\end{align*}
\qed
\end{proof}

We are now ready to put this together.

  \begin{lemma}
Sets of Fourier coefficients $\mathcal{\hat F}_1, \dots \mathcal{\hat
  F}_t$ are pairwise consistent,
    namely for all $i, j$ and $\alpha \preceq \alpha_i \wedge \alpha_j$, we have $\hat f^{\alpha}_i = \hat f^{\alpha}_j$.
  \end{lemma}
  \begin{proof}
  Consider any pair $\mathcal{\hat F}_i, \mathcal{\hat F}_j$ for some $i,j$, such that $\alpha_i \wedge \alpha_j \neq 0$.
  Recall, that $\hat f^\alpha_i$ and $\hat f^{\alpha}_j$ are solutions of the following linear systems:
  \begin{align}
  (\tilde C^{\alpha_i})_\beta = \sum_{\alpha \preceq \alpha_i} \hat f^{\alpha}_i \cdot (C^{\alpha_i} f^{\alpha})_\beta \label{eq:linearsys1}\\
  (\tilde C^{\alpha_j})_\beta = \sum_{\alpha \preceq \alpha_j} \hat
    f^{\alpha}_j \cdot (C^{\alpha_j} f^{\alpha})_\beta \label{eq:linearsys2}
  \end{align}

Then we have:
  \begin{align*}
(\tilde C^{\alpha_i \wedge \alpha_j}_{\alpha_i})_\gamma
    & = \sum\limits_{\eta \preceq \alpha_i \colon \eta \wedge \alpha_i
      \wedge \alpha_j  = \gamma} (\tilde C^{\alpha_i})_\eta  &
    \text{(Defn. \ref{def:aggregatemarginal})}\\
    & = \sum\limits_{\eta \preceq \alpha_i \colon \eta \wedge \alpha_i \wedge \alpha_j = \gamma} \sum\limits_{\alpha \preceq \alpha_i} \hat f^{\alpha}_i
    (C^{\alpha_i} f^{\alpha})_\eta  & \text{(from } \eqref{eq:linearsys1})\\
    & = \sum\limits_{\alpha \preceq \alpha_i} \hat f^{\alpha}_i
\sum\limits_{\eta \preceq \alpha_i \colon \eta \wedge \alpha_i \wedge
  \alpha_j = \gamma} (C^{\alpha_i} f^{\alpha})_\eta  \\
   & = \sum\limits_{\alpha \preceq \alpha_i} \hat f^{\alpha}_i
(C^{\alpha_i \wedge \alpha_j} f^{\alpha})_\gamma & \text{(Proposition \ref{prp:fourier-aux})}\\
   & = \sum\limits_{\alpha \preceq \alpha_i \wedge \alpha_j} \hat
  f^{\alpha}_i (C^{\alpha_i \wedge \alpha_j} f^{\alpha})_\gamma  & \text{Lemma~\ref{lmm:fourier-linear-system}}
  \end{align*}

Applying the same reasoning to \eqref{eq:linearsys2} we have:
\begin{align*}
    (\tilde C^{\alpha_i \wedge \alpha_j}_{\alpha_j})_\gamma
    = \sum\limits_{\alpha \preceq \alpha_i \wedge \alpha_j} \hat f^{\alpha}_j (C^{\alpha_i \wedge \alpha_j} f^{\alpha})_\gamma.
  \end{align*}
These two systems have unique solutions by Lemma~\ref{lmm:fourier-linear-system}.
Their solutions coincide, because for all $\gamma$ we have $(\tilde C^{\alpha_i \wedge \alpha_j}_{\alpha_i})_\gamma = (\tilde C^{\alpha_i \wedge \alpha_j}_{\alpha_j})_\gamma$.
This completes the proof, showing that $\hat f^\alpha_i = \hat f^\alpha_j$ for all $\alpha \preceq \alpha_i \wedge \alpha_j$.
\qed\end{proof}
\qed
\end{proof}

\begin{table}[t]
\centering
\begin{tabular}{|l|l|}
  \hline
  Algorithm & Running Time \\
\hline
$E$ & $O(2^d \cdot d^k)$ \\
\hline
$E_F$ & $O(d \cdot 2^d + d^k \cdot 2^{2k})$ \\
\hline
$I$ &$O(2^d \cdot d^k)$ \\
\hline
$Q/Q^+$ & $O(2^d \cdot d^k)$ \\
\hline
$F,F^+$ & $O(d \cdot 2^d + d^k \cdot 2^{2k})$ \\
\hline
$C/C^+$ & $\tilde O(\min(3^d, 2^d \cdot d^k) \cdot d^k k)$ \\
\hline
M & $\tilde O\left(2^d \cdot d^{k + \frac{1}{3}} \cdot n^{\frac{2}{3}} \cdot k^{-2/3}\right)$ \\
\hline
\end{tabular}
\caption{
Running time of different algorithms for ensuring privacy and consistency of the set of all $k$-way marginals over a $d$-dimensional data set. Here $n$ is the number of individuals in the table, $\epsilon$ is considered constant and $\tilde O$ notation hides polylogarithmic factors in $d$. The number $\omega \approx 2.37$ is the matrix multiplication exponent.
}\label{tab:running-times}
\end{table}

}


\ifnum\mydebug=0
\newlength{\figwidth}
\setlength{\figwidth}{0.3\textwidth}

\begin{figure*}[t]
\subfigure[ADULT data, $Q_1$]{
 \label{adult1m}
  \includegraphics[width=\figwidth]{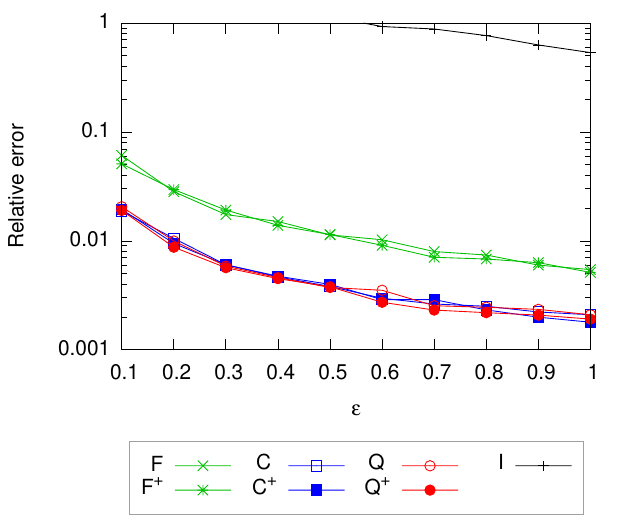}
}%
\subfigure[ADULT data, $Q_1^*$]{
 \label{adult1r}
  \includegraphics[width=\figwidth]{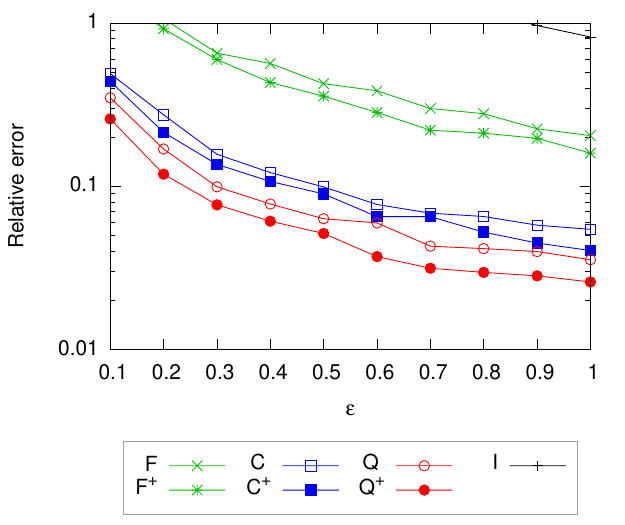}
}%
\subfigure[ADULT data, $Q_1^a$]{
 \label{adult1a}
  \includegraphics[width=\figwidth]{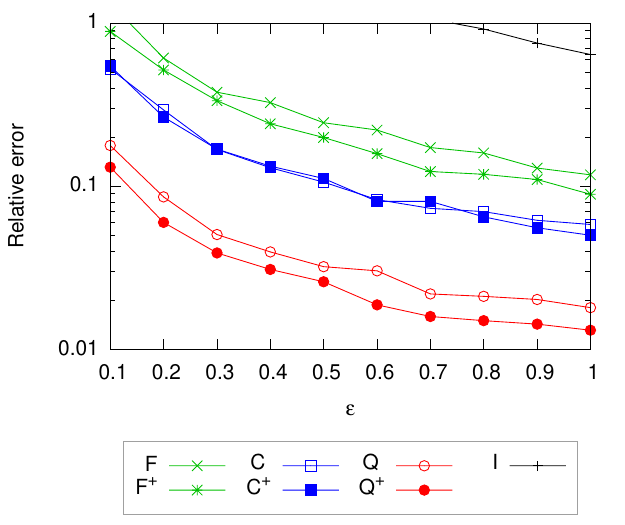}
}
\subfigure[ADULT data, $Q_2$]{
 \label{adult2m}
  \includegraphics[width=\figwidth]{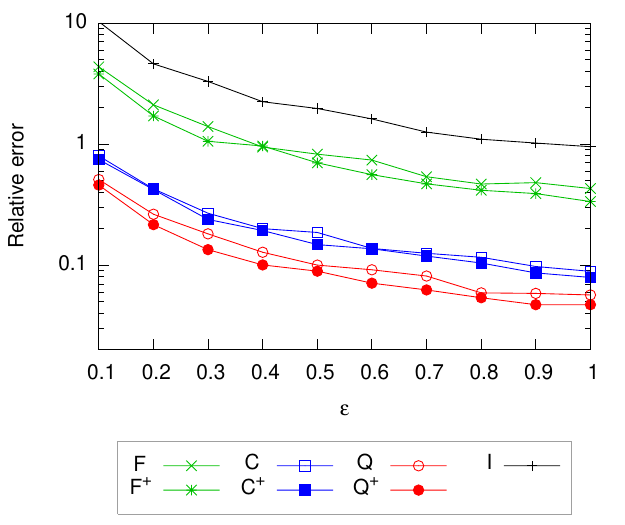}
}%
\subfigure[ADULT data, $Q_2^*$]{
 \label{adult2r}
  \includegraphics[width=\figwidth]{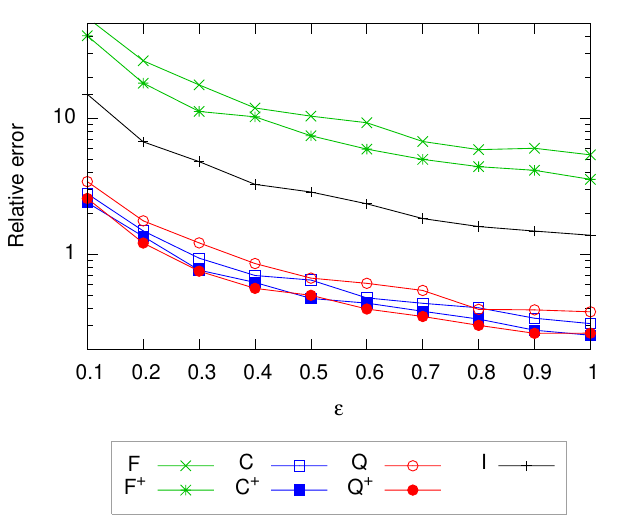}
}%
\subfigure[ADULT data, $Q_2^a$]{
  \label{adult2a}
  \includegraphics[width=\figwidth]{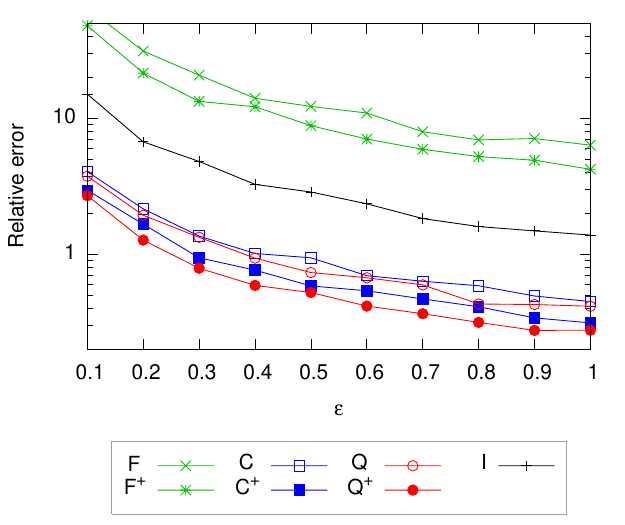}
}
\caption{Accuracy of marginal release on ADULT data}
\label{fig:adult}
\end{figure*}

\begin{figure*}[t]
\subfigure[NLTCS data, $Q_1$]{
 \label{nltcs1m}
  \includegraphics[width=\figwidth]{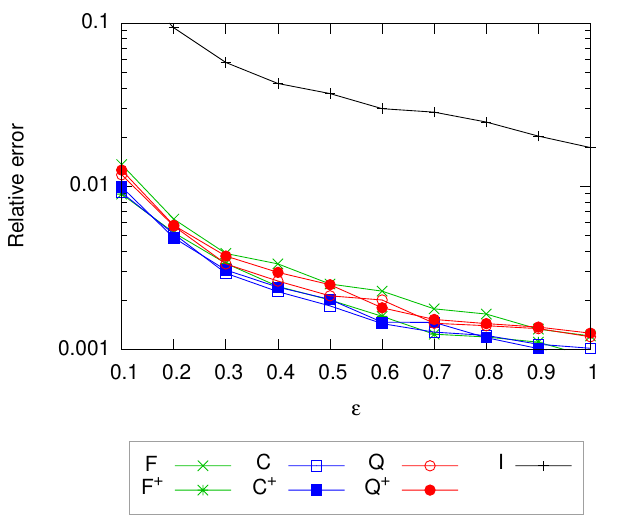}
}%
\subfigure[NLTCS data, $Q_1^*$]{
 \label{nltcs1r}
  \includegraphics[width=\figwidth]{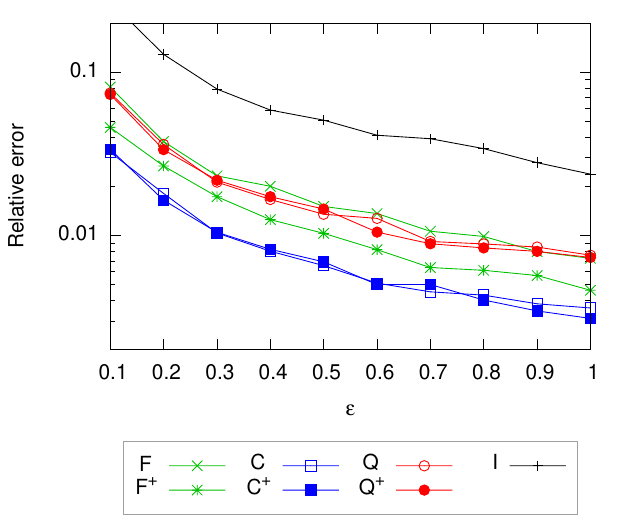}
}%
\subfigure[NLTCS data, $Q_1^a$]{
 \label{nltcs1a}
  \includegraphics[width=\figwidth]{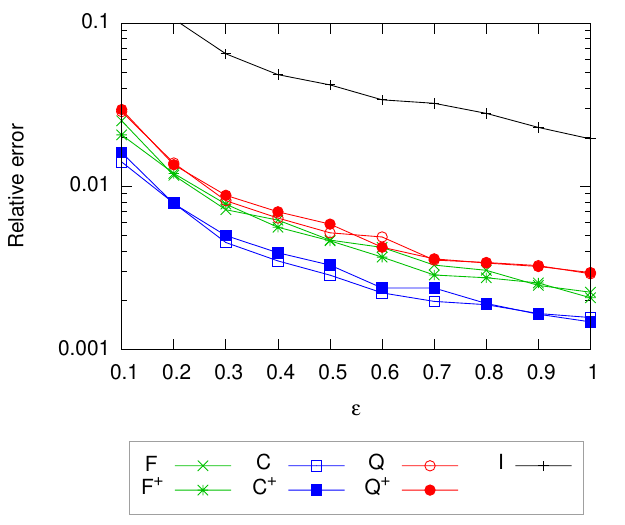}
}
\subfigure[NLTCS data, $Q_2$]{
 \label{nltcs2m}
  \includegraphics[width=\figwidth]{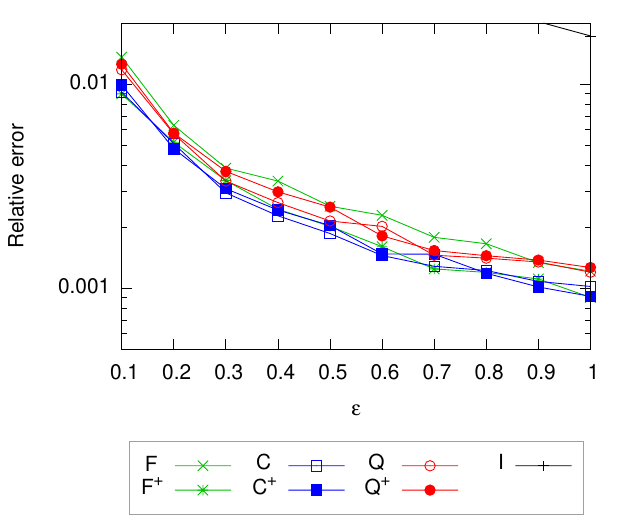}
}%
\subfigure[NLTCS data, $Q_2^*$]{
 \label{nltcs2r}
  \includegraphics[width=\figwidth]{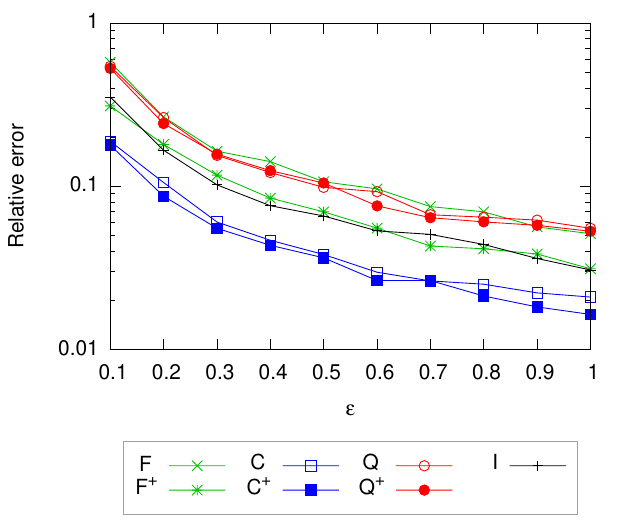}
}%
\subfigure[NLTCS data, $Q_2^a$]{
  \label{nltcs2a}
  \includegraphics[width=\figwidth]{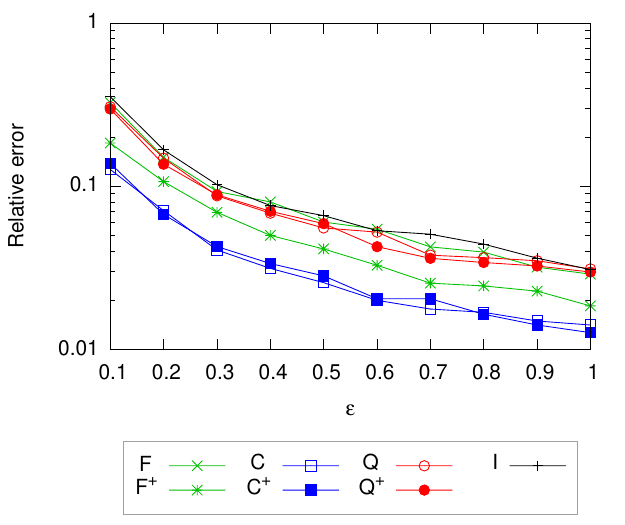}
}
\caption{Accuracy of marginal release on NLTCS data}
\label{fig:nltcs}
\end{figure*}

\section{Experimental study}\label{sec:experiments}
\label{expts}

\noindent
\textbf{Datasets.} We studied performance on two real datasets: \\
{\em Adult}: The Adult dataset from the UCI Machine Learning repository
  (\url{http://archive.ics.uci.edu/ml/}) has census information
on 32561 individuals. As in~\cite{DWHL11}, we extract
a subset of sensitive categorical attributes, for workclass
(cardinality 9), education (16), marital-status (7), occupation (15),
relationship (6), race (5), sex (2) and salary (2). \\
{\em NLTCS}: The National Long-Term Case Study 
from StatLib  (\url{http://lib.stat.cmu.edu/}), contains information about
  21576 individuals.
Each record consists 16 binary attributes, which correspond to
functional disability measures: 6 activities of daily living and 10
instrumental activities of daily living.

\medskip
\noindent
\textbf{Query workloads.}
The choice of query workloads in our experimental study is motivated
by an application of low-order marginals to statistical model
fitting.
In this setting, the typical set of queries consists of all $k$-way marginals
(for some small value of $k$) together with some subset of
$(k + 1)$-way marginals, chosen depending on the application.
We consider three different approaches: \\
1.  $Q_k$: all the $k$-way marginal tables.\\
2. $Q^*_k$: all the $k$-way marginal tables, plus half of all
  $(k + 1)$-way marginals. \\
3.  $Q^a_k$: all the $k$-way marginal tables, plus all $(k + 1)$-way
  marginals that include a fixed attribute.

\medskip
\noindent
\textbf{Evaluation metrics.}
We measure the average absolute error per entry in the set of
marginal queries.
To show the utility of these results, we scale each error by the
mean true answer of its respective marginal query, i.e., we plot it as a relative error.
Thus, a relative error of less than 1 is desirable, as otherwise the
true answers are dwarfed by the noise (on average).
Note that while the number of tuples in each dataset
is relatively small, our approaches do not depend on the tuple count, but rather
the dimensionality of the domain $N=\Pi_{i=1}^d |A_i|.$ Larger datasets would
only improve the quality metrics, while keeping
the running time essentially unchanged.

\medskip
\noindent
\textbf{Algorithms Used.}
We present results for
$\epsilon$-differential privacy.
Results for $(\epsilon, \delta)$-differential privacy are similar, and
are omitted.
We include seven approaches within the strategy/recovery framework,
based on choice of the strategy matrix, $S$.
Here, the notation $S^+$ indicates that we use the non-uniform noise
allocation for strategy $S$ as described in
Section~\ref{noisebudget}, while the corresponding $S$ is with uniform
noise.
\begin{itemize}
\item $S=I$ --- Add noise via Laplace mechanism directly to base
  cells and aggregate up to compute the marginals.
  Here, the optimal noise allocation is always uniform.
\item $S = Q$ --- Add uniform ($Q$) or non-uniform ($Q^+$) noise
  to each marginal independently.
\item $S=F$ --- Add uniform ($F$) or non-uniform ($F^+$) noise
  to each Fourier coefficient, corresponding to the given query workload.
\item $S=C$ --- Add uniform ($C$) or non-uniform ($C^+$) noise
  to each marginal returned by the greedy clustering strategy proposed in~\cite{DWHL11}.
\end{itemize}

Our goal is to study the effects of non-uniform noise budgeting over all strategies. The decision on which strategy to use rests with the data owner. However, we show clear tradeoffs between running time and accuracy for all strategies, which can provide helpful hints.

To ensure consistency of
the released marginals, we use the Fourier analytic approach,
described in Section~\ref{sec:consistency}.

\subsection{Adult Dataset}

Figure~\ref{fig:adult} shows the results on the Adult
data set, for query workloads $Q_1, Q_1^*, Q_1^a, Q_2, Q_2^*$ and $Q_2^a$.
The attributes in this data set have
varying cardinalities, but are encoded as binary attributes
as described in Section~\ref{sec:fourierdef}.
We plot the results on a logarithmic scale as we vary the privacy
parameter $\epsilon$, to more clearly show the relative performance of
the different measures.
Immediately, we can make several observations about the relative
performance of the different methods.
On this data, the naive method of materializing counts ($I$) is never
effective: the noise added is comparable to the magnitude of the data
in all cases.
Across the different query workloads, choosing the
strategy $S=Q$ works generally well. In this case non-uniform noise
allocation can significantly improve the accuracy.
For example, over workload $Q_1^*$ (Figure~\ref{adult1r}), we see an
improvement of 20-25\% in accuracy.

For more complex queries which result in more marginals of higher
degree ($Q_2^*$ and $Q_2^a$, in Figures~\ref{adult2r}
and \ref{adult2a} respectively), the accuracy is lower overall, and
the noise is greater than the magnitude of the data for more
restrictive settings of the privacy parameter $\epsilon$.
To some extent this can be mitigated as the number of individuals
represented in the data increases: the noise stays constant as the
value of the counts in the table grows.

Across this data, we observe that while the non-uniform approach
improves the accuracy of the Fourier strategy, it is inferior to other
strategies.
Although asymptotically this strategy has good properties (as
described in Table~\ref{tab:noise-k-way-summary}), in this case $k$ is
not very large, so the gap between the $k$ and $2^k$ terms for
constant $k$ is absorbed within the big-Oh notation.
The running times of our methods were all fast:
the Fourier ($F$) and Query ($Q$) methods took at most tens of seconds
to complete, while the clustering ($C$) took longer, due to
the more expensive clustering step.

\begin{figure}[t]
\centering
\includegraphics[width=0.7\columnwidth]{\plots/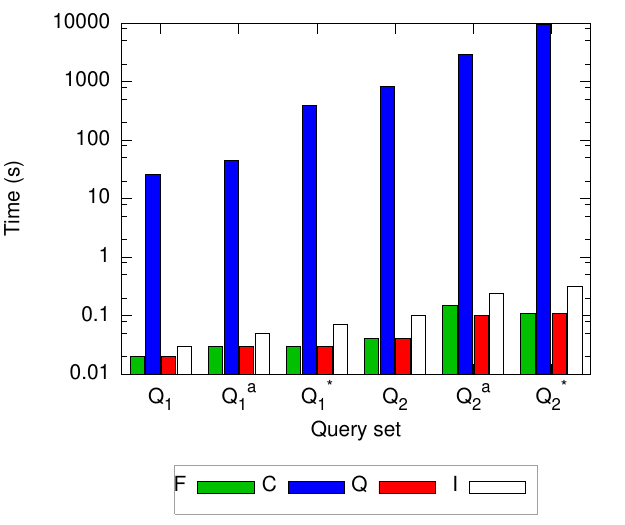}
\caption{Running time over NLTCS data}
\label{fig:nltcstime}
\vspace*{-6mm}
\end{figure}

\subsection{NLTCS data}

Figure~\ref{fig:nltcs} shows the corresponding results on the binary
NLTCS data.
Over all experiments, there is an appreciable benefit to applying
the optimal non-uniform budgeting.
The optimal budgeting case is reliably better than the uniform
version, for the same strategy matrix.
There are occasional inversions, due to the random nature of the
mechanisms used, but the trend is clear.
The advantage can be notable:
for example, on $Q_1^*$ (Figure~\ref{nltcs1r}) and $Q_2^*$
(Figure~\ref{nltcs2r}), the error of the Fourier strategy is reduced
30-35\% by using non-uniform budgeting.
For the clustering approach, the improvement
is smaller, but still measurable, around 5\% on
average.
However, recall that strategy $C$ becomes infeasible on higher
dimensional data, due to its exponential cost.

Figure~\ref{fig:nltcstime} shows the end-to-end running time of the
different methods.
This demonstrates clearly the dramatically slow running time of the
clustering method: reaching several hours to operate on a single,
moderate-sized dataset.
As the dimensionality increases, this becomes exponentially worse.
By contrast, the time needed by the other strategies is negligible:
always less than a second, and typically less than a tenth of a
second.
The optimization and consistency steps take essentially no time at
all, compared to the data handling and processing.
On the other hand, all methods have virtually constant time as a function of the number of tuples in the data.

Returning to the accuracy,
over the $k=1$-way marginals and variations, the approach of
materializing the base counts ($I$) is not competitive, while the
clustering strategy ($C$) achieves the least error.
The more lightweight Fourier-based approach achieves slightly more
error, but is much more scalable.
As the degree of the marginals increases ($Q_2^*$,
Figure~\ref{nltcs2r}, and $Q_2^a$, Figure~\ref{nltcs2a}), the
trivial solution of materializing the base cells becomes more
accurate.
For workloads that are made up of high-degree marginals, this method
dominates the other approaches, although such workloads are considered
less realistic.

\section{Concluding Remarks}
\label{concl}
We considered the problem of releasing data based on linear
queries, which captures the common case of data cubes and marginals.
We showed how existing matrix-based strategies can be improved by using
non-uniform noise based on the query workload.
Our results show that such non-uniform noise
results in significantly lower error across all cases considered.
Further, the cost of this is low, and the results can be made
consistent with minimal extra effort.

Other notions of consistency are possible within this framework.
For example, it is sometimes required that the query answers
correspond to a data set in which all counts are integral and non-negative.
This can be achieved when the method actually materializes a noisy set
of base counts $\hat{x}$ (as in the case of strategy $I$) by adding
the constraints that $\hat{x}_j \geq 0$ and rounding the results to
the nearest integer.
It remains to show how to enforce such consistency constraints
efficiently when base counts are not explicitly materialized.

On the theoretical side, we have shown bounds on accurate
$k$-way release under differential privacy of
$O(\frac{k}{\eps} \sqrt{\binom{d}{k}\binom{d+k}{k}})$.
An open problem is to close the gap
between this and the lower bound of
$\tilde{\Omega}(\frac{1}{\eps} \sqrt{\binom{d}{k}})$.

\fi

\medskip
\noindent
{\bf Acknowledgments.}
We  thank Adam D. Smith and Moritz Hardt for multiple useful
comments and suggestions.


\begin{thebibliography}{10}

\bibitem{BCDKST07}
B.~Barak, K.~Chaudhuri, C.~Dwork, S.~Kale, F.~McSherry, and K.~Talwar.
\newblock Privacy, accuracy, and consistency too: a holistic solution to
  contingency table release.
\newblock In {\em PODS}, 2007.

\bibitem{BLR08}
A.~Blum, K.~Ligett, and A.~Roth.
\newblock A learning theory approach to non-interactive database privacy.
\newblock In {\em STOC}, 2008. 

\bibitem{CSS10}
T.-H.~H. Chan, E.~Shi, and D.~Song.
\newblock Private and continual release of statistics.
\newblock In {\em  ICALP}, 2010. 

\bibitem{CPSSY12}
G.~Cormode, C.~Procopiuc, E.~Shen, D.~Srivastava, and T.~Yu.
\newblock Differentially private spatial decompositions.
\newblock In {\em ICDE}, 2012.

\bibitem{CPST12}
G.~Cormode, C.~Procopiuc, D.~Srivastava, and T.~Tran.
\newblock Differentially private publication of sparse data.
\newblock In {\em ICDT}, 2012.

\bibitem{DWHL11}
B.~Ding, M.~Winslett, J.~Han, and Z.~Li.
\newblock Differentially private data cubes: optimizing noise sources and
  consistency.
\newblock In {\em SIGMOD}, 2011.

\bibitem{Dwork:06}
C.~Dwork.
\newblock Differential privacy.
\newblock In {\em ICALP}, 2006.

\bibitem{DKMMN06}
C.~Dwork, K.~Kenthapadi, F.~McSherry, I.~Mironov, and M.~Naor.
\newblock Our data, ourselves: Privacy via distributed noise generation.
\newblock In {\em EUROCRYPT}, 2006.

\bibitem{DMNS06}
C.~Dwork, F.~McSherry, K.~Nissim, and A.~Smith.
\newblock Calibrating noise to sensitivity in private data analysis.
\newblock In {\em Theory of Cryptography}, 2006.

\bibitem{GRU12}
A.~Gupta, A.~Roth, and J.~Ullman.
\newblock Iterative constructions and private data release.
\newblock {\em CoRR}, abs/1107.3731, 2011.

\bibitem{HLS10}
M.~Hardt, K.~Ligett, and F.~McSherry.
\newblock A simple and practical algorithm for differentially private data
  release.
\newblock {\em CoRR}, abs/1012.4763, 2010.

\bibitem{HR10}
M.~Hardt and G.~N. Rothblum.
\newblock A multiplicative weights mechanism for privacy-preserving data
  analysis.
\newblock In {\em FOCS}, 2010

\bibitem{HT10}
M.~Hardt and K.~Talwar.
\newblock On the geometry of differential privacy.
\newblock In {\em  STOC}, 2010. 

\bibitem{HRMS10}
M.~Hay, V.~Rastogi, G.~Miklau, and D.~Suciu.
\newblock Boosting the accuracy of differentially-private histograms through
  consistency.
\newblock In {\em VLDB}, 2010.

\bibitem{KRSU10}
S.~P. Kasiviswanathan, M.~Rudelson, A.~Smith, and J.~Ullman.
\newblock The price of privately releasing contingency tables and the spectra
  of random matrices with correlated rows.
\newblock In {\em STOC}, 2010.

\bibitem{LHRMG10}
C.~Li, M.~Hay, V.~Rastogi, G.~Miklau, and A.~McGregor.
\newblock Optimizing linear counting queries under differential privacy.
\newblock In {\em PODS}, 2010. 

\bibitem{LM11}
C.~Li and G.~Miklau.
\newblock Efficient batch query answering under differential privacy.
\newblock {\em CoRR}, abs/1103.1367, 2011.

\bibitem{LZWY11}
Y.~Li, Z.~Zhang, M.~Winslett, and Y.~Yang.
\newblock Compressive mechanism: Utilizing sparse representation in
  differential privacy.
\newblock In {\em WPES}, 2011.

\bibitem{MM09}
F.~McSherry and I.~Mironov.
\newblock Differentially private recommender systems: Building privacy into the
  netflix prize contenders.
\newblock In {\em KDD}, 2009.

\bibitem{Rao:65}
C.~R. Rao.
\newblock {\em Linear statistical inference and its applications}.
\newblock Wiley, 1965.

\bibitem{BV04}
S.~Boyd and L.~Vandenberghe.
\newblock Convex Optimization.
\newblock Cambridge Univ. Press, 2004.

\bibitem{RR10}
A.~Roth and T.~Roughgarden.
\newblock Interactive privacy via the median mechanism.
\newblock In {\em STOC}, 2010.

\bibitem{XWG10}
X.~Xiao, G.~Wang, and J.~Gehrke.
\newblock Differential privacy via wavelet transforms.
\newblock In {\em ICDE}, 2010.

\end{thebibliography}

\ifnum\final=0
\newpage
\appendix

\eat{
\section{Noise Budgeting for $(\eps,\delta)$-differential privacy}

In this section, we give the proof for the optimal noise budgeting for
the case of $(\eps,\delta)$-differential privacy. 
To do this, we provide an auxiliary
lemma, which generalizes Theorem~\ref{thm:gaussian-mechanism} for the
case when we want the resulting noise to be non-uniform, having
different variance in different components. 
This relies on analyzing the $L^2$ sensitivity $\Delta_2$
(Definition~\ref{def:lpsensitivity}) of a rescaled version of the
data. 

\begin{lemma}\label{lmm:non-uniform-gaussian-mechanism}
  Let $N(0,1)^k$ be a random variable, corresponding to
  $k$-dimensional Gaussian noise with variance equal to one in each
  component. 
  Given a function $f \colon D \rightarrow \mathbb R^k$ and a vector
  $z \in \mathbb R^k$, releasing $f(x) + \beta^2 \cdot z \cdot 
  N(0,1)^k$ satisfies $(\epsilon, \delta)$-differential privacy, if
  $\beta^2 > \left(\frac{\Delta_2(f')}{\epsilon}\right)^2\log(1 /
  \delta)$ \grinote{Get the constants right}, 
  where $f' \colon D \rightarrow \mathbb R^k$ is defined componentwise
  as $f'_i(x) = \frac{f_i(x)}{\sqrt{z_i}}$. 
\end{lemma}

\begin{proof}
    Adding noise to each component of $f'$ uniformly from the Gaussian
    distribution with variance
    $\left(\frac{\Delta_2(f')}{\epsilon}\right)^2 \log (1 / \delta)$ satisfies $(\epsilon, \delta)$-differential privacy by Theorem~\ref{thm:gaussian-mechanism}.
  We can compute $f_i(x) = f'_i(x) \cdot \sqrt{z_i}$ and thus get
  variance $\left(\frac{\Delta_2(f')}{\epsilon} \right)^2 \log(1 /
  \delta) \cdot z_i$ for component $f_i$.
\qed
\end{proof}

This lemma allows us to provide different precision for different
components of a multidimensional function, so that the variance in
$i$-th component is proportional to the $i$-th component of a given
vector $z$. 
If we want to release a set of $k$ linear queries, which depend on $m$
linear functions of the original data we can use a vector $z \in
\mathbb R^m$, normalized in some way. 

\begin{proof}[of Theorem~\ref{thm:linear-budgeting} (2)]
By Lemma~\ref{lmm:non-uniform-gaussian-mechanism}, to satisfy
$(\epsilon, \delta)$-differential privacy it is sufficient to add
noise with variance
$$O\left(\frac{\log(1 / \delta) \cdot z_i}{\epsilon^2}\sum\limits_{j =
  1}^m \frac{\|M_j\|^2_\infty}{z_j}\right),$$ 
to the $i$-th linear function $M_i x$.
This gives a linear functional of the noise we are interested in equal to:
\begin{align*}
O&\left(\sum\limits_{i = 1}^k a_i \sum_{j = 1}^m B^2_{ij} 
   \frac{\log(1/ \delta)}{\epsilon^2} \cdot z_j \sum_{t = 1}^m
   \frac{\|M_t\|^2_\infty}{z_t}\right) 
\\ & = O\left(\frac{\log(1 / \delta)}{\epsilon^2} 
    \left(\sum_{t = 1}^m \frac{\|M_t\|^2_\infty}{z_t} \right) 
    \sum\limits_{j = 1}^m z_j \sum_{i = 1}^k a_i B^2_{ij} \right). 
\end{align*}

Denoting $b_j = \sum_{i = 1}^k a_i B^2_{ij}$ to minimize this variance we need
to minimize $\sum_{i = 1}^m z_i b_i$, subject to a normalization
constraint $\sum_{i = 1}^m \frac{\|M_i\|^2_\infty}{z_i} = 1$. 
Using the method of Lagrange multipliers as in the proof of
Theorem~\ref{thm:linear-budgeting} (1),  
the optimal solution is given as 
$z_i =\frac{\sum_{j = 1}^m \sqrt{b_j} \|M_j\|_\infty}{\sqrt{b_i} \|M_i\|_\infty}$ 
and the resulting variance is: 
$$O\left(\frac{\log(1/\delta)}{\epsilon^2} \bigg(\sum_{i = 1}^m \sqrt{b_i} \|M_i\|_\infty\bigg)^2\right)
= O\bigg(\frac{\|a\transpose R\|_{1/2}}{\epsilon^2} \log(1/\delta)\bigg).$$
\qed
\end{proof}
}

\section{Budgeting for $(\epsilon, \delta)$-differential privacy}\label{app:approx-dp-budgeting}

\begin{proofof}{Proposition~\ref{prp:nonuniform-budgeting}, part (ii)}
Consider a matrix $S' \in \mathbb R^{m \times N}$ with entries $S'_{ij} = \frac{\epsilon_i S_{ij}}{\alpha}$, where $\alpha = \max_{j = 1}^N \sqrt{\sum_{i = 1}^m \epsilon^2 S^2_{ij}}$.
Because $L_2$-norm of any column of $S'$ is at most $1$ by Theorem~\ref{thm:gaussian-mechanism} adding to values $S'_i x$ Gaussian noise with variance $\frac{2 \log(2/\delta)}{\epsilon^2}$ guarantees $(\epsilon, \delta)$-differential privacy. Given these noisy values which we denote as $\tilde S'_i$, the noisy values $\tilde S_i$ can be computed as $\tilde S_i = \frac{\alpha \tilde S'_i}{\epsilon_i}$ and so $\tilde S_i$ has variance $\frac{2 \log(2/\delta) \alpha^2}{\epsilon_i^2 \epsilon^2}$, which is at most 
$\frac{2 \log(2/\delta)}{\epsilon_i^2}$ for $\epsilon > \alpha$ as desired.
\end{proofof}

\begin{proofof}{Corollary~\ref{cor:noisebudget-variance} for $(\epsilon, \delta)$-differential privacy}
For $(\epsilon,\delta)$-differential privacy the analog of
optimization problem~\eqref{soptpb} -- \eqref{ineq:eta} is:
\begin{align*}
\text{Minimize: } 2 \log(2/\delta) \sum_{i = 1}^m \frac{s_i}{\epsilon^2_i} \\
\sum_{i = 1}^m C^2_i \epsilon^2_i = \epsilon^2 
\end{align*}
We can ignore the multiplicative factor in the objective function, because it doesn't change the optimum solution. Then the corresponding Lagrange function is:
$$\Lambda(\lambda, \epsilon_i^2) = \left(\sum_{i = 1}^m \frac{s_i}{\epsilon^2_i}\right) +\lambda \left(\sum_{i = 1}^m C^2_i \epsilon^2_i - \epsilon^2\right).$$
Using condition $\frac{\partial}{\partial \epsilon^2_i} = 0$ we have:
$$-\frac{s_i}{\epsilon^4_i} + \lambda C^2_i = 0,$$
so $\epsilon^2_i = \frac{1}{C_i} \sqrt{\frac{s_i}{\lambda}}$ and we have
$$\sum_{i = 1}^m C_i \sqrt{\frac{s_i}{\lambda}} = \epsilon^2,$$
which gives $\sqrt{\lambda} = \frac{1}{\epsilon^2} \sum_{i = 1}^m C_i \sqrt{s_i}$ and $\epsilon^2_i = \frac{\epsilon^2}{C_i} \frac{\sqrt{s_i}}{\sum_{j = 1}^m C_j \sqrt{s_j}}$.
The value of the objective function is now given as:
$$\frac{2 \log(2/\delta)}{\epsilon^2} \left(\sum_{i = 1}^m C_i \sqrt{s_i}\right)^2.$$
If all values $C_i$ are equal to $C$ this gives $\frac{2 \log(2/\delta)C^2}{\epsilon^2}\left( \sum_{i = 1}^m \sqrt{s_i}\right)^2$.
\end{proofof}

\section{Fourier strategy--Uniform Noise}
\label{app:marg}
In this section, we provide a tighter analysis of the expected noise
that results from using the Fourier strategy with uniform noise. 
This is to be compared with the bound of 
$O(2^{\|\alpha\|} |B| \log(|B|/\delta)/\eps)$ stated
in \cite[Theorem~7]{BCDKST07}. 

\begin{theorem}\label{thm:improved-fourier-noise}
Let the query workload $Q$ consist of marginals $C^{\alpha_1}, \dots,
C^{\alpha_k}$ and $B$ be the set of Fourier coefficients,
corresponding to this workload, such that 
$B = \{\beta | \exists C^{\alpha_i} \colon \beta \preceq \alpha_i\}$. 
Then if we release all Fourier coefficients in $B$ via the Laplace
mechanism with uniform noise and use them to compute private values of
the marginals $\tilde C^{\alpha_1}, \dots, \tilde C^{\alpha_k}$ for
each marginal $C^{\alpha_i}$ bounds on the noise per marginal can be given
as: 
\begin{enumerate}
\item $\E[\|C^{\alpha_i} x - \tilde C^{\alpha_i}\|_1] \le \frac{|B| \sqrt{2^{3+\|\alpha_i\|}}}{\epsilon}.$
\item $\|C^{\alpha_i} x - \tilde C^{\alpha_i}\|_1 =
  O\left(\frac{|B|\sqrt{2^{\|\alpha_i\|}
  \log\left(|B|/\delta\right)}}{\epsilon}\right),$ 
with probability at least $1 - \delta.$
\end{enumerate}
\end{theorem}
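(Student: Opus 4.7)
\textbf{Proof proposal for Theorem~\ref{thm:improved-fourier-noise}.}

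The plan is to bound the per-entry error of $\tilde C^{\alpha_i}$ by exploiting the fact that when $\beta \preceq \alpha_i$, the basis-to-marginal evaluation $(C^{\alpha_i} f^\beta)_\gamma$ collapses to a simple constant of magnitude $2^{d/2-\|\alpha_i\|}$, after which one sums over the $2^{\|\alpha_i\|}$ cells of the marginal.

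First I will compute the $L_1$-sensitivity of the vector-valued function that releases all Fourier coefficients in $B$. Since $f^\alpha_\beta = 2^{-d/2}(-1)^{\langle \alpha,\beta\rangle}$, changing $x$ in a single coordinate changes each Fourier coefficient by exactly $\pm 2^{-d/2}$, so $\Delta_1 = |B|\,2^{-d/2}$. By Theorem~\ref{thm:laplace-mechanism}, adding independent Laplace noise $\nu_\beta$ with scale $b = |B|/(\epsilon\,2^{d/2})$ (variance $2b^2$) to each released coefficient is $\epsilon$-differentially private.

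Next, I will use Theorem~\ref{thm:fourier-basics}(2) to write, for each cell $\gamma \preceq \alpha_i$,
\[
(\tilde C^{\alpha_i})_\gamma - (C^{\alpha_i}x)_\gamma = \sum_{\beta \preceq \alpha_i} \nu_\beta \,(C^{\alpha_i} f^\beta)_\gamma.
\]
For $\beta \preceq \alpha_i$, every $\eta$ with $\eta\wedge\alpha_i=\gamma$ agrees with $\gamma$ on the support of $\beta$, so by Theorem~\ref{thm:fourier-basics}(1) the inner sum collapses and yields $(C^{\alpha_i} f^\beta)_\gamma = 2^{d/2-\|\alpha_i\|}(-1)^{\langle \beta,\gamma\rangle}$. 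Consequently the per-cell error is $2^{d/2-\|\alpha_i\|}$ times a signed sum $W_\gamma := \sum_{\beta\preceq\alpha_i} \pm\nu_\beta$ of $2^{\|\alpha_i\|}$ independent Laplace variables.

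For part (1), Jensen's inequality gives $\mathbb E|W_\gamma| \le \sqrt{\mathbb E[W_\gamma^2]} = \sqrt{2^{\|\alpha_i\|}\cdot 2b^2}$, so the expected error in cell $\gamma$ is at most $2^{d/2-\|\alpha_i\|}\,b\,\sqrt{2^{\|\alpha_i\|+1}}$. Summing over the $2^{\|\alpha_i\|}$ cells and substituting $b = |B|/(\epsilon\,2^{d/2})$ produces a bound of the form $(|B|/\epsilon)\sqrt{2^{O(1)+\|\alpha_i\|}}$, matching the claim up to constants. For part (2), I will invoke a Bernstein-type subexponential tail bound for sums of independent Laplace variables, giving $\Pr[|W_\gamma| \ge b\sqrt{2^{\|\alpha_i\|}\log(|B|/\delta)}\,] \le \delta/2^{\|\alpha_i\|}$ up to constants, then union-bound over the $2^{\|\alpha_i\|} \le |B|$ cells and sum.

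The only non-routine obstacle is keeping the simplification $(C^{\alpha_i}f^\beta)_\gamma = 2^{d/2-\|\alpha_i\|}(-1)^{\langle\beta,\gamma\rangle}$ transparent; everything else reduces to scalar Laplace concentration and counting. The key quantitative improvement over the prior bound in~\cite{BCDKST07} comes from replacing a naive per-coefficient triangle inequality (which costs a factor $2^{\|\alpha_i\|}$ in the per-cell error) by the second-moment estimate $\mathbb E|W_\gamma| \le \sqrt{\mathrm{Var}(W_\gamma)}$, which saves a factor of $\sqrt{2^{\|\alpha_i\|}}$ by exploiting cancellations among the signed Laplace terms.
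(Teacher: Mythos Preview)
Your proposal is correct and follows essentially the same route as the paper: compute the $L_1$-sensitivity of the Fourier coefficients, use Theorem~\ref{thm:fourier-basics}(1) to collapse $(C^{\alpha_i}f^\beta)_\gamma$ to $2^{d/2-\|\alpha_i\|}(-1)^{\langle\beta,\gamma\rangle}$, then bound the signed Laplace sum via Jensen for part~(1) and via subexponential concentration plus a union bound for part~(2). The only minor discrepancies are that the paper's neighboring-database convention carries an extra factor of~$2$ in the sensitivity (accounting for the constant $2^{3}$ rather than your $2^{1}$), and that the paper union-bounds over the $|B|$ marginals after first collapsing all cell errors to a single $|\sum_\beta \phi_\beta|$ by symmetry, whereas you union-bound over the $2^{\|\alpha_i\|}\le |B|$ cells directly---your version is arguably cleaner, since the paper's collapse to a single random variable is only valid in expectation, not pointwise.
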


\begin{proof}
\eat{
We will use the following lemma:
\begin{lemma}\label{lmm:jensen-bound} For all $\alpha \in \{0,1\}^d$ and $\gamma \preceq \alpha$, if for all $\beta \preceq \alpha$ random variables $\phi_\beta$ are i.i.d random samples from Laplace distribution with zero mean and variance $\sigma^2$, then for a random variable $Y_{\alpha, \gamma} = \left|\sum_{\beta \preceq \alpha} \phi_\beta (-1)^{\langle \beta, \gamma \rangle}\right|$ we have:
\begin{enumerate}
\item $\mathbb{E}[Y_{\alpha, \gamma}] \le \sigma \sqrt{2^{\|\alpha\|}}$
\item $|Y_{\alpha, \gamma}| = O\left(\sigma \sqrt{2^{\|\alpha\|}} \log(1/\delta)\right)$ with probability at least $1 - \delta$.
\end{enumerate}
\end{lemma}

\begin{proof}
First, observe that Laplace random variable is symmetric, so for a fixed $\gamma \preceq \alpha$ the random variables $\phi'_\beta = \phi_\beta (-1)^{\langle \beta, \gamma \rangle}$ are i.i.d. Laplace random variables with variance $\sigma^2$.
Thus, using Jensen's inequality:
\begin{align*}
& \mathbb E \left[ \left|\sum_{\beta \preceq \alpha} \phi_\beta (-1)^{\langle \beta, \gamma \rangle}\right|\right] = \mathbb E \left[ \left|\sum_{\beta \preceq \alpha} \phi'_\beta\right|\right] \le \sqrt{\mathbb{E}\left[\left(\sum_{\beta \preceq \alpha} \phi'_\beta\right)^2 \right]} \\
& = \sqrt{\Var\left[ \sum_{\beta \preceq \alpha} \phi'_\beta  \right]}  = 
\sqrt{\sum_{\beta \preceq \alpha}\Var[\phi'_\beta]} =
\sigma \sqrt{2^{\|\alpha\|_1}}.
\end{align*}
The concentration bound follows directly from the standard concentration of the sum of independent Laplace variables (see e.g.~\cite{CSS10}):
\begin{fact}\label{fact:laplace-concentration}
If $X_1, \dots, X_n$ are i.i.d. Laplace random variables with zero mean and variance $\sigma^2$, then with probability at least $1 - \delta$ we have
$\left|\sum_{i = 1}^n X_i\right|$ is at most $O(\sigma \sqrt{n} \log(1/\delta)).$
\end{fact}
\end{proof}

Now we are ready to prove the theorem.}

Let $\alpha = \alpha_i$ and w.l.o.g, assume that $x = 0$, so:
\begin{align}
\|C^\alpha x - \tilde C^\alpha \|_1 & = 
\|\tilde C^\alpha \|_1  = 
\bigg\|\sum_{\beta \preceq \alpha} \phi_\beta C^{\alpha} f^{\beta}\bigg\|_1 
\nonumber \\
 & = \sum\limits_{\gamma \preceq \alpha} \bigg|\sum_{\beta \preceq \alpha} \phi_\beta (C^{\alpha} f^{\beta})_\gamma\bigg| \nonumber\\
& = \sum\limits_{\gamma \preceq \alpha} \bigg|\sum_{\beta \preceq \alpha} \phi_\beta \sum_{\eta \colon \eta \wedge \alpha = \gamma}(-1)^{\langle \beta, \eta \rangle}2^{-d/2}\bigg| \nonumber\\
&  = {2^{-d/2}}\sum\limits_{\gamma \preceq \alpha} \bigg|\sum_{\beta \preceq \alpha} \phi_\beta \cdot 2^{d - \|\alpha\|}(-1)^{\langle \beta, \gamma \rangle}\bigg| \nonumber\\
& = 2^{d/2}
\bigg|\sum_{\beta \preceq \alpha} \phi_\beta\bigg|
\label{eqn:l1-noise}
\end{align}

The last step follows due to the symmetry of the Laplace random
variables $\phi_\beta$. 
Note that for iid unbiased random variables $Y$, 
$\E[|Y|] =  \E[\sqrt{Y^2}] \leq  \sqrt{\E[Y^2]} = \sqrt{\Var(Y)}$,
using Jensen's inequality. 
Hence, using the linearity of expectation and the fact that 
$\Var(\phi_\beta) = \frac{8 |B|^2}{\epsilon^2 2^d}$, we have:
\begin{align}
&\E\left[\|C^\alpha x - \tilde C^{\alpha} \|_1\right] = 
\E\bigg[2^{d/2}
\bigg|\sum_{\beta \preceq \alpha}
\phi_\beta \bigg| \bigg] \nonumber \\
& \le 2^{d/2}
\sqrt{2^{\|\alpha\|}\Var(\phi_\beta)}
= 
\frac{\sqrt{2^{3+\|\alpha\|}} |B|}{\epsilon}
\label{eqn:noise-bound}.
\end{align}


To get a concentration bound for the second part of the Theorem,
we use the fact that with probability at least $1-\delta$, 
$|\sum_{\beta} \phi_\beta| = O(\sqrt{\sum_{\beta} \Var(\phi_\beta) \log
1/\delta})$ (see, e.g.~\cite{CSS10}).

Substituting this in \eqref{eqn:l1-noise}, 
we have with probability $1-\delta$, 
\[ \|C^{\alpha_i} x - \tilde C^{\alpha_i} \|_1 =
O\left(\sqrt{2^{\|\alpha_i\|}}\frac{|B|}{\eps} \log^{1/2} \frac{1}{\delta}\right) \] 

Rescaling $\delta$ by a factor of $|B|$ \grinote{Why union bound on $|B|$, not the number of marginals?} means that by a union bound 
this holds for all $\alpha_i$. 
\end{proof}

\section{Omitted Proofs}
\label{app:k-way}
\begin{proofof}{Lemma~\ref{lem:budgeting-k-way-marginals}}
We have
  $q = 2^k \binom{d}{k}$, $m = \sum_{i = 0}^k \binom{d}{i}$ and $N = 2^d$.
For the set of all $k$-way marginals the matrices 
 $Q \in \mathbb R^{q \times N}$, $S \in \mathbb R^{m \times N}$ and 
 $R \in \mathbb R^{q \times m}$ 
 have the following entries: 
\begin{align*}
&Q_{(i,t)j} = \begin{cases}
    1, \mbox{ if } i \wedge j = t \\
    0, \mbox{ otherwise,}
  \end{cases}
&
\end{align*}
\begin{align*}
&R_{(i,t)j} = \begin{cases}
    (-1)^{\langle i, t\rangle} 2^{d/2 - k}, \mbox{ if } j \preceq i \\
    0, \mbox{ otherwise.}
  \end{cases}
&S_{ij} = (-1)^{\langle i, j \rangle} / 2^{d/2}, 
\end{align*}
\noindent
where we abuse notation indexing entries by vectors $i,j,t \in
\{0,1\}^d$ only if entries corresponding to respective subsets exist. 
Here, $t$ indexes $\binom{d}{k}$ marginals. 
Formally, we use all pairs $(i,t)$, where $\|i\| = k$ and $t \preceq i$
to index over $[q]$.
We use an index $i$, where $\|i\| \le k$ to index over $[m]$ 
and a regular index $0 \le i < 2^d$ to index over $[N]$.
The grouping number of $S$ is equal to $m$ because the groups consist of individual rows of $S$, so for all $1 \le i \le m$ we have $C_i = 1 / 2^{d/2}$ and $b_i = 2 \sum_{j = 1}^q a_j R^2_{ji}$.
Using $a = \overrightarrow{\mathbf 1_q}$ and substituting entries of $R$, we have
$b_i = 2^{d - k + 1} \cdot \binom{d - \|i\|}{k - \|i\|}$, where we use index $i$ as described above.
Thus, $$\sum_{i = 1}^m b_i^{1/3} = 2^{(d - k + 1) / 3}\sum_{i = 0}^k \binom{d}{i} \binom{d - i}{k - i}^{1/3} .$$
%

Now, using Corrollary~\ref{cor:noisebudget-variance}
privacy (and assuming $k < d/2$) we get the sum of the noise variances over all entries equal to 
\begin{align*}
  & \frac{1}{2^{k - 1} \eps^2} \bigg(\sum_{i = 0}^k \binom{d}{i} \binom{d - i}{k - i}^{1/3}\bigg)^3
   \le \frac{3(k + 1)^2}{2^{k - 1} \eps^2} \sum_{i = 0}^k \binom{d}{i}^3 \binom{d - i}{k - i}
 \\ & = \frac{3 (k + 1)^2}{2^{k - 1} \eps^2} \sum_{i = 0}^k
  \binom{d}{i}^2 \binom{d}{k} \binom{k}{i} \\
&  = \frac{3 (k + 1)^2 \binom{d}{k}}{2^{k - 1} \eps^2} \sum_{i = 0}^k \binom{d}{i}^2  \binom{k}{i}
  \le \frac{3 (k + 1)^2 \binom{d}{k}^2}{2^{k - 1} \eps^2} \sum_{i = 0}^k \binom{d}{i}  \binom{k}{i}
  \\ & = \frac{3 (k + 1)^2 \binom{d + k}{k}\binom{d}{k}^2}{2^{k - 1} \eps^2}
\end{align*}

Thus, dividing by $q$, the variance of noise per entry of a marginal table is  $O\left(\frac{k^2 \binom{d}{k} \binom{d + k}{k}}{2^{2k} \eps^2}\right)$.
By Jensen's 
inequality, the expected magnitude of noise per entry is 
$O\left(\frac{k \sqrt{\binom{d}{k} \binom{d + k}{k}}}{2^{k} \eps}\right)$.

For $(\eps,\delta)$-differential privacy Corollary~\ref{cor:noisebudget-variance} gives variance at most: 
\begin{align*}
  O&\bigg(\frac{\log(1 / \delta)}{2^{k} \eps^2} \bigg(\sum_{i = 0}^k \binom{d}{i} \binom{d - i}{k - i}^{1/2}\bigg)^2\bigg)
\\ &  \le O\bigg(\frac{\log(1 / \delta)}{2^{k} \eps^2} \cdot 2 (k + 1) \sum_{i = 0}^k \binom{d}{i}^2 \binom{d - i}{k - i}\bigg) \\
  & = O\bigg(\frac{k \log(1 / \delta)}{2^{k} \eps^2} \sum_{i =
  0}^k \binom{d}{i} \binom{d}{k} \binom{k}{i}\bigg) \\
& = O\bigg(\frac{k \binom{d}{k} \binom{d + k}{k} \log(1 / \delta)}{2^{k} \eps^2}\bigg),
\end{align*}
which gives $O(\frac{1}{2^k\eps}\sqrt{k \binom{d + k}{k} \log(1 / \delta)})$ expected noise per marginal entry.
\end{proofof}

\fi
\end{document}